\def\fastmode{0}
\def\arxivmode{1}
\def\stocmode{0}
\def\showauthornotes{1}
\def\showtableofcontents{0}
\def\showkeys{0}
\def\showdraftbox{0}
\def\showcolorlinks{1}
\def\usemicrotype{1}
\def\showfixme{1}
\def\longform{1}
\newtheorem{theorem}{Theorem}[section]
\newtheorem*{theorem*}{Theorem}
\newtheorem{proposition}[theorem]{Proposition}
\newtheorem*{proposition*}{Proposition}
\newtheorem{lemma}[theorem]{Lemma}
\newtheorem*{lemma*}{Lemma}
\newtheorem{corollary}[theorem]{Corollary}
\newtheorem*{conjecture*}{Conjecture}
\newtheorem*{fact*}{Fact}
\newtheorem*{hypothesis*}{Hypothesis}
\theoremstyle{definition}
\newtheorem{definition}[theorem]{Definition}
\newtheorem{example}[theorem]{Example}
\theoremstyle{remark}
\newtheorem*{claim*}{Claim}
\newtheorem*{remark*}{Remark}
\newtheorem*{observation*}{Observation}
\let\mathbb\varmathbb
\newcommand{\doi}[1]{\textsc{doi}:
  \href{http://dx.doi.org/#1}{\nolinkurl{#1}}}
\newcommand{\savehyperref}[2]{\texorpdfstring{\hyperref[#1]{#2}}{#2}}
\newcommand{\Sref}[1]{\hyperref[#1]{\S\ref*{#1}}}
\newcommand{\Authornote}[2]{{\sffamily\small\color{red}{[#1: #2]}}}
\newcommand{\Authornotecolored}[3]{{\sffamily\small\color{#1}{[#2: #3]}}}
\newcommand{\Authorcomment}[2]{{\sffamily\small\color{gray}{[#1: #2]}}}
\newcommand{\Authorstartcomment}[1]{\sffamily\small\color{gray}[#1: }
\newcommand{\Authorfnote}[2]{\footnote{\color{red}{#1: #2}}}
\newcommand{\Authorfixme}[1]{\Authornote{#1}{\textbf{??}}}
\newcommand{\Authormarginmark}[1]{\marginpar{\textcolor{red}{\fbox{\Large #1:!}}}}
\newcommand{\Authornote}[2]{}
\newcommand{\Authornotecolored}[3]{}
\newcommand{\Authorcomment}[2]{}
\newcommand{\Authorstartcomment}[1]{}
\newcommand{\Authorfnote}[2]{}
\newcommand{\Authorfixme}[1]{}
\newcommand{\Authormarginmark}[1]{}
\newcommand{\card}[1]{\lvert#1\rvert}
\newcommand{\textparen}[1]{\text{(#1)}}
\newcommand{\because}[1]{\textparen{because #1}}
\renewcommand{\because}[1]{\textparen{because #1}}
\newcommand{\defeq}{\stackrel{\mathrm{def}}=}
\newcommand{\seteq}{\mathrel{\mathop:}=}
\newcommand{\mper}{\,.}
\newcommand\bdot\bullet
\DeclareMathOperator{\Tr}{Tr}
\DeclareMathOperator{\val}{val}
\DeclareMathOperator{\conv}{conv}
\newcommand{\R}{\mathbb R}
\newcommand{\problemmacro}[1]{\texorpdfstring{\textup{\textsc{#1}}}{#1}\xspace}
\newcommand{\pnum}[1]{{\footnotesize #1}}
\newcommand{\maxcut}{\problemmacro{max cut}}
\newcommand{\maxthreesat}{\problemmacro{max \pnum{3}-sat}}
\newcommand{\cA}{\mathcal A}
\newcommand{\cF}{\mathcal F}
\newcommand{\cH}{\mathcal H}
\newcommand{\cI}{\mathcal I}
\newcommand{\cP}{\mathcal P}
\newcommand{\cQ}{\mathcal Q}
\newcommand{\cS}{\mathcal S}
\renewcommand{\leq}{\leqslant}
\renewcommand{\le}{\leqslant}
\renewcommand{\geq}{\geqslant}
\renewcommand{\ge}{\geqslant}
\newcommand{\draftbox}{\begin{center}
  \fbox{%
    \begin{minipage}{2in}%
      \begin{center}%
          \Large\textsc{Working Draft}\\%
        Please do not distribute%
      \end{center}%
    \end{minipage}%
  }%
\end{center}
\vspace{0.2cm}}
\newcommand{\draftbox}{}
\let\epsilon=\varepsilon
\numberwithin{equation}{section}
\newcommand\MYcurrentlabel{xxx}
\newcommand{\MYstore}[2]{%
  \global\expandafter \def \csname MYMEMORY #1 \endcsname{#2}%
}
\newcommand{\MYload}[1]{%
  \csname MYMEMORY #1 \endcsname%
}
\newcommand{\MYnewlabel}[1]{%
  \renewcommand\MYcurrentlabel{#1}%
  \MYoldlabel{#1}%
}
\newcommand{\MYdummylabel}[1]{}
\newcommand{\torestate}[1]{%
  \let\MYoldlabel\label%
  \let\label\MYnewlabel%
  #1%
  \MYstore{\MYcurrentlabel}{#1}%
  \let\label\MYoldlabel%
}
\newcommand{\restatetheorem}[1]{%
  \let\MYoldlabel\label
  \let\label\MYdummylabel
  \begin{theorem*}[Restatement of \prettyref{#1}]
    \MYload{#1}
  \end{theorem*}
  \let\label\MYoldlabel
}
\newcommand{\restatelemma}[1]{%
  \let\MYoldlabel\label
  \let\label\MYdummylabel
  \begin{lemma*}[Restatement of \prettyref{#1}]
    \MYload{#1}
  \end{lemma*}
  \let\label\MYoldlabel
}
\newcommand{\restateprop}[1]{%
  \let\MYoldlabel\label
  \let\label\MYdummylabel
  \begin{proposition*}[Restatement of \prettyref{#1}]
    \MYload{#1}
  \end{proposition*}
  \let\label\MYoldlabel
}
\newcommand{\restatefact}[1]{%
  \let\MYoldlabel\label
  \let\label\MYdummylabel
  \begin{fact*}[Restatement of \prettyref{#1}]
    \MYload{#1}
  \end{fact*}
  \let\label\MYoldlabel
}
\newcommand{\restate}[1]{%
  \let\MYoldlabel\label
  \let\label\MYdummylabel
  \MYload{#1}
  \let\label\MYoldlabel
}
\newcommand{\addreferencesection}{
  \phantomsection
\ifnum\stocmode=0
  \addcontentsline{toc}{section}{References}
\else
  \addcontentsline{toc}{section}{References \hspace*{1in} --------- End of extended abstract ---------}
\fi

}
\newcommand{\e}{\epsilon}
\let\origparagraph\paragraph
\renewcommand{\paragraph}[1]{\vspace*{-7pt}\origparagraph{#1.}}
\newlist{enumerate*}{enumerate*}{1}
\setlist[enumerate*]{label=(\arabic*),
  itemjoin={{, }}, itemjoin*={{, and }}}
\newcommand{\psd}[1]{\mathbb{S}^{#1}_+}
\newcommand{\perfectmatching}[1]{\ensuremath{\textup{PM}}(#1)}
\newcommand*{\setcomp}[2]{\left\{#1\,\middle|\,#2\right\}}
\newcommand*{\size}[1]{\left|#1\right|}
\newcommand{\face}[1]{\left\{#1\right\}}
\newcommand{\feq}{\equiv}
\DeclarePairedDelimiter{\vspan}{\langle}{\rangle}
\DeclarePairedDelimiterXPP{\ispan}[1]{}{\langle}{\rangle}{\sb{I}}{#1}
\newcommand{\same}[1]{\simeq\sb{#1}}
\newcommand*{\Gset}[2]{\left\{#1\,\middle|\,#2\right\}}
\title{The matching problem has no small symmetric SDP}
\author{%
G\'abor Braun
\\
Georgia Tech
\and
Jonah Brown-Cohen
\\
U.C.\ Berkeley
\and
Arefin Huq
\\
Georgia Tech
\and
Sebastian Pokutta\thanks{Research reported in this paper was partially
  supported by NSF CAREER award CMMI-1452463 and NSF grant
  CCF-1255900.}
\\
Georgia Tech
\and
Prasad Raghavendra\thanks{This work is supported by the National Science Foundation under the
NSF CAREER Award CCF-1407779, and the Sloan Fellowship from the Alfred P.\ Sloan Foundation.
}
\\
U.C.\ Berkeley
\and
Aurko Roy\thanks{Research reported in this paper was partially
  supported by NSF grant CMMI-1333789.}
\\
Georgia Tech
\and
Benjamin Weitz\thanks{This work was supported by the National Science Foundation Graduate Research Fellowship Program award DGE 1106400}
\\
U.C.\ Berkeley
\and
Daniel Zink
\\
Georgia Tech
}
\begin{document}

\maketitle
\draftbox
\thispagestyle{empty}

\begin{abstract}
  \cite{Yannakakis91, Yannakakis88} showed that the matching problem does not have a small
  symmetric linear program. \cite{rothvoss2013matching}
  recently proved that any, not necessarily symmetric, linear program
  also has exponential size. In light of this, it is natural to ask whether the matching
  problem can be expressed compactly in a framework such as
  semidefinite programming (SDP) that is more powerful than linear
  programming but still allows efficient optimization.
We answer this question negatively for symmetric SDPs: any symmetric
SDP for the matching problem has exponential size.

We also show that an $O(k)$-round Lasserre SDP relaxation for the
asymmetric metric traveling salesperson problem yields at least as good an approximation
as any \emph{symmetric}  SDP relaxation of size $n^{k}$.

The key technical ingredient underlying both these results is an upper
bound on the degree needed to derive polynomial identities that hold
over the space of matchings or traveling salesperson tours.

\end{abstract}

\clearpage

\ifnum\showtableofcontents=1
{
\tableofcontents
\thispagestyle{empty}
 }
\fi

\clearpage

\setcounter{page}{1}

\section{Introduction}

In his seminal work, \cite{Yannakakis91, Yannakakis88}
showed that any symmetric linear program for the matching problem has
exponential size.
\cite{rothvoss2013matching} recently showed that one can drop the
symmetry requirement: any linear program for the matching problem has
exponential size. Since it is possible to optimize over matchings
in polynomial time, it follows that there is a gap between
problems that have small linear formulations and problems that allow
efficient optimization.

In light of this gap, it is reasonable to ask whether semidefinite
programming (SDP) can characterize all problems that allow efficient optimization.  Semidefinite programs generalize linear programs
and can be solved efficiently both in theory and practice (see
\cite{VandenbergheBoyd96}). SDPs are the basis of some of the best
algorithms currently known, for example the
approximation of \cite{GoemansWilliamson95} for \maxcut{} .

Following prior work (see for example \cite{GouveiaParriloThomas2011}) we define the size of an SDP formulation as the dimension of the psd cone from which the polytope can be obtained as an affine
slice. Some recent work has shown limits to the power of small SDPs.
\cite{BDP2013, DBLP:journals/mp/BrietDP15} nonconstructively give an exponential lower bound on
the size of SDP formulations for most 0/1 polytopes.
\cite{DBLP:conf/stoc/LeeRS15} give an exponential lower bound for solving the traveling salesperson problem (TSP) and approximating \maxthreesat{}. However the question of
whether the matching problem has a small SDP remains open. We give a
partial negative answer to this question by proving the analog of
Yannakakis's result for semidefinite programs:
\begin{theorem}
Any symmetric SDP for the matching problem has exponential size.
\end{theorem}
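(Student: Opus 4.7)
The plan is to generalize Yannakakis's symmetric linear-programming lower bound for matching to the SDP setting by combining the Gouveia--Parrilo--Thomas characterization of psd rank with the representation theory of $S_n$ acting on the matching polytope of $K_n$ by vertex permutations. A symmetric SDP factorization of size $r$ yields maps $A \colon V \to \psd{r}$ on vertices and $B \colon F \to \psd{r}$ on facets with $\iprod{A(v), B(f)} = \text{slack}(v, f)$, together with a homomorphism $\rho \colon S_n \to O(r)$ that is equivariant under both $A$ and $B$; the goal is to show that $r$ must be exponential in $n$.

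First I would decompose $\R^r$ into $\rho$-isotypic components indexed by partitions $\lambda \vdash n$. By the hook-length formula, any irreducible $S_n$-representation of dimension at most $r$ corresponds to a partition whose first row has length at least $n - d$ for some slowly growing $d = d(r)$, so if $r$ is subexponential then only partitions with ``few boxes off the top row'' contribute. The matrix entries carried by such representations are precisely those realized by $S_n$-equivariant polynomials of degree $O(d)$ in the edge-incidence variables, viewed modulo the vanishing ideal $I_M$ of the matching variety. Unfolding the factorization $\iprod{A(v), B(f)}$ in these coordinates rewrites it as a sum-of-squares-type certificate, of degree controlled by $d$, for the nonnegativity of the facet slacks of the matching polytope.

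The final and crucial step is to invoke the technical degree bound highlighted in the abstract: any polynomial identity that holds over matchings admits a witness of bounded degree. Combined with the previous step, this promotes the SoS-type factorization into a bona fide low-degree Positivstellensatz certificate for each odd-cut slack of the matching polytope. A contradiction then follows by constructing either a pseudo-expectation functional that satisfies all degree-$d$ matching identities while reporting a negative value on some odd-cut slack, or equivalently two families of matchings that are indistinguishable by degree-$d$ polynomial statistics but are separated by an odd-cut inequality; this forces $d = \Omega(n^\epsilon)$ and hence $r$ exponential. The main obstacle will be exactly this final alignment: converting the representation-theoretic degree bound on the symmetric factorization into a sharp sum-of-squares lower bound for matching without losing polynomial factors in the degree, and it is here that the promised upper bound on the degree of polynomial identities over matchings does the essential work.
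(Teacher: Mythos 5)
Your high-level strategy matches the paper's: pass from a small symmetric SDP to a low-degree sum-of-squares certificate for the facet slacks of the matching polytope, then contradict a sum-of-squares lower bound. But there are two places where the proposal diverges, one of which is a genuine gap.

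\textbf{A genuinely different (but underdeveloped) intermediate step.} Where you propose decomposing $\R^r$ into $S_n$-isotypic components, bounding the partitions that can appear via the hook-length formula, and arguing that the associated matrix coefficients are low-degree equivariant polynomials, the paper takes a more elementary group-theoretic route. It works with the alternating group $A_n$ (not $S_n$), extracts a basis $\cH$ of at most $\binom{d+1}{2}$ functions $h_{ij}(s)=\sqrt{X^s}_{ij}$ on which $A_n$ acts by permuting indices, observes that each orbit is small, and then invokes a classical primitivity theorem of Dixon--Mortimer: any subgroup of $A_n$ of index less than $\binom{n}{k}$ contains $A([n]\setminus W)$ for some $|W|<k$. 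Combined with a lemma that two perfect matchings agreeing on $E[W]$ lie in the same $A([n]\setminus W)$-orbit, this shows each $h\in\cH$ is a junta depending only on edges inside a set of size $<k$, hence a degree-$<k/2$ polynomial. Your representation-theoretic route is plausible in spirit, but it skips the nontrivial translation between irreducibles appearing in the ambient $\R^r$ and the degree of the functions $h_{ij}$ \emph{on the matching variety}; the paper's orbit/stabilizer count does this translation directly and also explains why $A_n$ rather than $S_n$ is the right group (the Dixon--Mortimer statement is cleanest there, and one can always pass to the even permutation mapping a matching to another).

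\textbf{The critical gap: the final lower bound is not something to ``construct,'' it is imported.} You write that a contradiction ``then follows by constructing either a pseudo-expectation functional \ldots or equivalently two families of matchings that are indistinguishable by degree-$d$ polynomial statistics.'' This is exactly the hard part, and you cannot simply assert it. The paper does not prove a Lasserre/SoS lower bound for matching from scratch. Instead it performs a concrete variable substitution that collapses the degree-$(2k-1)$ certificate on $K_n$ (for the odd-set objective $f_{E[S]}$) into a Positivstellensatz refutation, also of degree $2k-1$, of the existence of a perfect matching in an \emph{odd clique} $K_m$ with $m=\Theta(n)$. At that point it cites Grigoriev's theorem that any $PC_>$ refutation of the $\mathrm{MOD}_2^m$ principle requires degree $\Omega(m)$. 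So the contradiction comes from a known hard instance and a known degree lower bound, glued together by your Theorem~3.9-style derivation bound (which you correctly identified as essential). Without the substitution trick and the appeal to Grigoriev you are left with a circular promise: you would be invoking an SoS lower bound for matching in order to prove an SDP lower bound for matching. Also, the bound one gets is $d=\Omega(n)$, not merely $\Omega(n^\epsilon)$; this matters for turning the argument into the stated $2^{\Omega(n)}$ size bound.
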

As we explain below, the main challenge we faced in obtaining this result was to develop machinery to handle the nontrivial structure of the solution space of matchings. 

Using a similar argument, we also show that for the asymmetric metric traveling salesperson problem the
optimal symmetric semidefinite formulation of a given size is essentially
achieved by the respective level of the Lasserre hierarchy.

\subsection*{Related work}

\ifnum\stocmode=0
Bounding the size of general linear programming formulations for a
given problem was initiated by the seminal paper of
\cite{Yannakakis91, Yannakakis88}.  In Yannakakis's model, a general linear program
for say the perfect matching polytope $\perfectmatching{n}$ consists
of a higher-dimensional polytope $Q \in \R^D$ and a projection $\pi$ such
that $\pi(Q) = \perfectmatching{n}$.  The size of the linear program is
measured as the number of inequalities required to define the polytope $Q$.

\cite{Yannakakis91} characterized the size of
linear programming formulations in terms of the non-negative rank of an
associated matrix known as the \emph{slack matrix}.  Using this
characterization, Yannakakis showed that any \emph{symmetric} linear
program for the matching problem or traveling salesperson problem
requires exponential size.  Roughly speaking, a linear program for the
matching problem is symmetric if for every permutation
$\sigma$ of the vertices in the corresponding graph, there is a permutation
$\tilde{\sigma}$ of the coordinates in $\R^{D}$ that leaves the linear program (and thus the polytope \(Q\)) unchanged.
\fi

\ifnum\stocmode=1
In Yannakakis's model (also referred to as
extended formulations), a general linear program
for say the perfect matching polytope $\perfectmatching{n}$ consists
of a higher-dimensional polytope $Q \in \R^D$ and a projection $\pi$ such
that $\pi(Q) = \perfectmatching{n}$.  The size of the linear program is
measured as the number of constraints defining the polytope $Q$.
A linear program for an optimization problem on graphs, like maximum matching
or the traveling salesperson problem, is said to be \emph{symmetric} if for every permutation
$\sigma$ of the vertices in the graph, there is a corresponding permutation
$\tilde{\sigma}$ of the coordinates in $\R^{D}$ that leaves the LP invariant.
\fi

A natural question that came out of the work of Yannakakis is whether
dropping the symmetry requirement helps much. \cite{KaibelPashkovichTheis10} showed that dropping the symmetry requirement can mean the difference between polynomial and superpolynomial size linear extended formulations for matchings with \(\lfloor\log n\rfloor\) edges in \(K_n\), while \cite{DBLP:journals/mp/Goemans15} and \cite{DBLP:journals/mor/Pashkovich14} showed that it can mean the difference between subquadratic and quadratic size linear extended formulations for the permutahedron.
Nonetheless,
\cite{extform4, DBLP:journals/jacm/FioriniMPTW15} and
\cite{rothvoss2013matching} answered this question negatively for the
TSP and matching problems respectively: any linear extended
formulation of either problem, symmetric or not, has exponential size. In particular, \cite{extform4, DBLP:journals/jacm/FioriniMPTW15} established a \(2^{\Omega(\sqrt{n})}\) lower bound on any linear formulation for the TSP, while 
\cite{rothvoss2013matching} later established a \(2^{\Omega(n)}\) lower bound on any linear formulation for matching, which by a known reduction implies a \(2^{\Omega(n)}\) lower bound for the TSP.
From a
computational standpoint, these are strong lower bounds against
solving the TSP or matching problem exactly via small linear programs.
Subsequently, the framework of Yannakakis has been generalized towards
showing lower bounds even for approximating combinatorial optimization
problems in \cite{BraunFPS12, DBLP:journals/mor/BraunFPS15, chan2013approximate, bravermanmoitra13, DBLP:conf/focs/BazziFPS15}.

For the class of maximum constraint satisfaction problems (MaxCSPs),
\cite{chan2013approximate} established a
connection between lower bounds for general linear programs and lower
bounds against an explicit linear program, namely the hierarchy of \cite{SheraliAdams1990}.  Using that connection, \cite{chan2013approximate} showed that
for every constant $d$ and for every MaxCSP, the $d$-round Sherali-Adams
LP relaxation yields at least as good an approximation as any LP relaxation of
size $n^{d/2}$.  By appealing to lower bounds on Sherali-Adams
relaxations of MaxCSPs in literature, they give super-polynomial
lower bounds for \maxthreesat and other MaxCSPs.

Given the general LP lower bounds, it is natural to ask whether the situation is different for SDP relaxations.
Building on the approach of \cite{chan2013approximate},
\cite{Lee2014power}
showed that for the class of MaxCSPs, for every constant \(d\), the $d$-round Lasserre SDP relaxation yields at least as
good an approximation as any symmetric SDP relaxation of size $n^{d/2}$.  In
light of known lower bounds for Lasserre SDP relaxations of
\maxthreesat, this yields a corresponding lower bound for
approximating \maxthreesat.  In a recent advance,
\cite{DBLP:conf/stoc/LeeRS15} show an exponential lower bound even for
\emph{asymmetric} SDP relaxations of the TSP.

\subsection*{Contribution}

We show that there is no small symmetric SDP for the matching problem. 
Our result is an SDP analog
of the result in \cite{Yannakakis91, Yannakakis88} that ruled out a small symmetric LP for the  matching problem.  Specifically, we show:

\begin{theorem}
There exists an absolute constant $\alpha > 0$ such that for every $\e \in
[0,1)$, every \emph{symmetric} SDP relaxation approximating the perfect
matching problem within a factor $1-\frac{\e}{n-1}$ has size at least
$2^{\alpha n}$.
\end{theorem}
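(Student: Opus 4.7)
The plan is to establish this lower bound by adapting the symmetric-SDP-to-sum-of-squares reduction of \cite{Lee2014power} (originally developed for MaxCSPs) to the perfect matching problem. The main new ingredient, as the excerpt advertises, is a degree bound on polynomial certificates for identities that hold over the set of perfect matchings of $K_n$.

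First I would set up the representation-theoretic framework. A symmetric SDP for $\perfectmatching{n}$ of size $d$ gives a PSD factorization of the (approximate) slack matrix that is invariant under the action of $S_n$ on the vertices of $K_n$. Decomposing the factorization into isotypic components of this action constrains which irreducible $S_n$-representations can support the factorization as a function of $d$. For $d \le 2^{\alpha n}$ with $\alpha$ sufficiently small, only representations indexed by Young diagrams of bounded complexity can contribute, and one can realize each entry of the factorization as a symmetric polynomial of degree $D = O(\alpha n)$ in the edge indicator variables $\{x_e : e \in E(K_n)\}$.

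Second, I would invoke the polynomial-identity lemma highlighted in the excerpt to upgrade this representation-theoretic low-degree description into an honest low-degree sum-of-squares proof. The subtle point is that an entry of a symmetric PSD factorization is only required to agree with a low-degree polynomial on $0/1$ points representing perfect matchings; to convert it into an actual SoS certificate one needs that every polynomial vanishing on the matching variety admits a bounded-degree representation modulo the vanishing ideal, and that every polynomial nonnegative on this variety admits a bounded-degree SoS certificate modulo that ideal (a Positivstellensatz tailored to matchings). Granting such a lemma, any symmetric SDP of size $2^{\alpha n}$ can be simulated by $O(\alpha n)$ levels of a Lasserre-type SoS relaxation for perfect matching.

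Third, I would close the argument by exhibiting an SoS integrality gap: a pair of instances, one with and one without a perfect matching (the canonical choice is built on top of $K_n$ for odd $n$, so that parity forbids a perfect matching), that are indistinguishable by SoS at degree $o(n)$ up to factor $1-\e/(n-1)$. Such a gap follows from a standard parity obstruction for the Edmonds odd-set inequalities. Combining the $O(\alpha n)$ degree from the reduction with a linear-in-$n$ SoS lower bound produces a contradiction for sufficiently small $\alpha$. The main obstacle is the polynomial-identity lemma itself: the matching variety is not a complete intersection, generic effective Nullstellensatz bounds are far too weak, and making the degree truly linear in $\log d$ requires exploiting the combinatorial exchange structure of matchings (Edmonds's odd-set inequalities and the matroid structure of matchings) rather than any black-box algebraic bound. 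The remainder of the argument is a careful bookkeeping exercise once this lemma is in hand.
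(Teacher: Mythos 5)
Your high-level plan matches the paper's: reduce a symmetric SDP to a low-degree sum-of-squares certificate over matchings and then contradict a Grigoriev-style degree lower bound for the parity ($\text{MOD}_2$) obstruction on an odd clique. The identification of the key technical ingredient---a degree bound for polynomial identities over matchings---is also correct (the paper proves \(F \same{(\cP_n, 2\deg F - 1)} 0\) for any \(F\) in the matching ideal). However, two of your intermediate steps don't hold up.

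First, the ``isotypic decomposition'' route to low-degree basis functions doesn't give you what you need. The property you need is that each \emph{individual} basis function \(h_{ij}\) of the PSD factorization depends only on the edges inside a small vertex set, so that \(h_{ij}\) is represented by a low-degree polynomial. Decomposing \(\operatorname{span}(\cH)\) into irreducible \(S_n\)-modules constrains the \emph{span}, not the individual elements of the spanning set; there are spans supported on low-complexity irreps whose natural generators are not juntas. What the paper actually uses is the orbit-stabilizer theorem applied to the finite \(A_n\)-invariant \emph{set} \(\cH\) of size at most \(\binom{d+1}{2}\): each \(h \in \cH\) has a stabilizer of index less than \(\binom{n}{k}\), and then a theorem of Dixon and Mortimer (Lemma~\ref{lem:DixonMortimer}) says such a subgroup of \(A_n\) must contain a pointwise alternating group \(A([n]\setminus W)\) for some \(\card{W}<k\). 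Combined with the fact that matchings in the same \(A([n]\setminus W)\)-orbit agree on \(E[W]\) (Lemma~\ref{lem:evenPerm}), this gives the junta property directly. Your representation-theoretic framing would need a separate, non-trivial argument to recover the per-element junta structure, and as stated it does not.

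Second, you do not need---and in fact must not rely on---``a Positivstellensatz tailored to matchings'' stating that every nonnegative polynomial has a bounded-degree SoS certificate. Such a statement is false for degree \(o(n)\) (it is exactly the kind of thing Grigoriev's theorem rules out), and the argument would be circular if it were needed. The point is that the PSD factorization already \emph{hands you} the sum of squares: \(\tilde C(f) - f = \sum_j h_j^2 + \mu_f\) as an identity of functions on perfect matchings, with each \(h_j \in \operatorname{span}(\cH)\). Once the junta lemma shows the \(h_j\) are low-degree polynomials, the difference of the two sides is a polynomial in \(\ispan{\cP_n}\), and all you need is the \emph{ideal membership} degree bound (Theorem~\ref{thm:derivation}) to turn that functional identity into a degree-\(O(k)\) polynomial congruence modulo \(\cP_n\). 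After the substitution that collapses \(K_n\) onto the odd clique \(K_m\), this congruence becomes a low-degree refutation of \(\text{MOD}_2^m\), contradicting Grigoriev when \(k=o(n)\). Replacing your proposed Positivstellensatz with the ideal membership lemma, and replacing the isotypic-decomposition step with the Dixon--Mortimer junta argument, closes both gaps.
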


To prove this we show that if the matching problem has a small symmetric SDP relaxation, then there is a low-degree sum of squares refutation of the existence of a perfect matching in an odd clique, which contradicts a
result by \cite{grigoriev2001linear}.

The key technical obstacle in adapting the MaxCSP argument to the matching
problem is the non-trivial algebraic structure of the underlying
solution space (the space of all perfect matchings).
A multilinear polynomial is zero over the
solution space of a MaxCSP (the Boolean hypercube $\{0,1\}^n$) if and only if all the coefficients of the polynomial are zero, which is trivial to test.
In contrast, simply testing whether a multilinear polynomial is zero over all perfect matchings is non-trivial.
Nonetheless, in a key lemma we show that every multilinear polynomial $F$ that is identically zero over perfect matchings can be certified as such via a derivation of degree
$2\cdot\deg(F)-1$, starting from the linear and quadratic
constraints that define the space of perfect matchings.

Our second result concerns the asymmetric metric traveling salesperson problem, which is a restriction of the TSP to the case where edge costs obey the triangle equality but are not necessarily symmetric, so that the cost from \(u\) to \(v\) may not equal the cost from \(v\) to \(u\). We show that Lasserre SDP relaxations are more or less optimal
among all symmetric SDP relaxations for approximating the
asymmetric metric traveling salesperson problem. The precise statement follows.
\begin{theorem}
  For every constant $\rho > 0$, if there exists a symmetric SDP relaxation of size
  $r < \sqrt{\binom{2n}{k}} - 1$
  which achieves a $\rho$-approximation for asymmetric metric TSP instances on $2n$
  vertices, then the $(2k-1)$-round Lasserre relaxation achieves a
  $\rho$-approximation for asymmetric metric TSP instances on $n$ vertices.
\end{theorem}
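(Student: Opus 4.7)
The plan is to adapt the framework of \cite{Lee2014power}, which established an analogous result for MaxCSPs, to the algebraic setting of traveling salesperson tours, using the polynomial-identity lemma for tours promised in the introduction. At a high level, representation theory of $S_{2n}$ forces any symmetric SDP of size $r < \sqrt{\binom{2n}{k}} - 1$ to be a ``low-junta'' object — each entry, viewed as a function of the tour, is a polynomial of degree at most $k-1$ in the edge indicators — and this junta structure can then be repackaged as a Lasserre pseudo-expectation of level $2k-1$ on the $n$-vertex instance.

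First, I would invoke a Yannakakis-style factorization for symmetric SDPs over the asymmetric TSP polytope on $2n$ vertices, producing for each Hamiltonian tour $\tau$ a PSD matrix $U_\tau \in \psd{r}$ and for each supporting hyperplane $c$ a symmetric matrix $V_c$, with $\iprod{U_\tau,V_c}$ equal to the slack of $\tau$ against the $\rho$-approximation threshold. Averaging over the $S_{2n}$-action on vertices makes the family $\{U_\tau\}$ equivariant. I would then decompose $\psd{r}$ as an $S_{2n}$-module into Specht modules; standard bounds show that an irreducible indexed by $\lambda \vdash 2n$ with $2n - \lambda_1 = j$ has dimension on the order of $\binom{2n}{j} - \binom{2n}{j-1}$, so the hypothesis $(r+1)^2 < \binom{2n}{k}$ forbids all $j \ge k$. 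Hence each entry of $U_\tau$ is a multilinear polynomial of degree at most $k-1$ in the edge indicators of $\tau$.

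Next, I would reduce an asymmetric metric TSP instance on $n$ vertices to an instance on $2n$ vertices in a symmetry-preserving fashion (for instance by adjoining $n$ auxiliary vertices whose cost structure forces them into prescribed positions of the tour, so that optimal $2n$-tours biject with optimal $n$-tours). Feeding the padded instance into the hypothesised symmetric SDP and combining the low-junta representation above with the polynomial-identity lemma for tours — which guarantees that every degree-$d$ polynomial vanishing on tours admits an SoS derivation of degree $2d-1$ from the defining linear and quadratic constraints — yields a genuine Lasserre pseudo-expectation of degree $2k-1$ on the $n$-vertex instance, achieving the same $\rho$-approximation. PSD-ness of the pseudo-expectation is inherited from that of the $U_\tau$.

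The principal obstacle is the final synthesis step: the representation-theoretic bookkeeping produces low-degree polynomial representatives of the SDP vectors, but converting these into a bona fide Lasserre pseudo-expectation requires respecting the non-trivial ideal of polynomial relations cutting out the tour polytope. This is precisely where the polynomial-identity lemma is indispensable; without its tight $2d-1$ degree bound one could only conclude that the simulating Lasserre level is some weaker function of $k$. Arranging the padding reduction so it is simultaneously compatible with the $S_{2n}$-symmetry and with the Lasserre moment structure also requires delicate bookkeeping, and is where the analysis differs most sharply from the MaxCSP case of \cite{Lee2014power}.
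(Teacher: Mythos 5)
Your outline correctly identifies the three pillars of the argument---the Yannakakis-style sum-of-squares factorization coming from symmetric strong duality (the paper's \prettyref{lem:factorSymSDP}), a ``junta'' theorem showing the basis functions depend on few coordinates, and the low-degree ideal-membership theorem for tour constraints (\prettyref{thm:tsp-derivation})---and it is reasonable to route the junta step through Specht-module dimension bounds rather than the paper's Dixon--Mortimer subgroup theorem (\prettyref{lem:DixonMortimer}); the two are essentially interchangeable ways of converting ``small orbit'' into ``large pointwise stabilizer.''

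However, there is a concrete gap that your proposal glosses over, and it is exactly the place the $n \to 2n$ doubling is needed. The junta conclusion one actually gets---whether via Dixon--Mortimer or via a Specht decomposition---is that each basis function depends only on the positions of at most $k$ vertices in the tour \emph{and on the sign of the permutation}. In your own representation-theoretic accounting, the sign representation corresponds to $\lambda=(1^{2n})$ and is one-dimensional, so it is never ruled out by a size bound of the form $(r+1)^2 < \binom{2n}{k}$. The paper's \prettyref{prop:tsp-junta} states this residual sign dependence explicitly. Without eliminating it you cannot represent the basis functions as degree-$<k$ polynomials in the edge indicators $x_{ij}$, because the sign is not a low-degree function of the tour. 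The paper's reduction from $K_n$ to $K_{2n}$ (\prettyref{lem:tsp-evenPerm}) is not merely a ``padding that forces auxiliary vertices into prescribed positions'' to biject optima; its essential feature is that the lift $\Phi(\sigma)$ of every $n$-vertex tour is an \emph{even} permutation of $[2n]$ (each transposition of $\sigma$ becomes a pair of transpositions), so that after restricting to the image of $\Phi$ the sign dependence disappears and each $h\circ\Phi$ becomes a genuine junta on the $n$-vertex tour. Your proposal should name this as the purpose of the doubling and verify that the auxiliary-vertex costs (equal to the originals, with zero cost on each pair $\{i,i'\}$) both preserve the triangle inequality and leave the optimum unchanged; as written, the padding is under-specified and the sign obstruction is unaddressed, which would make the final ``degree $\le k-1$ in edge indicators'' claim false.
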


\section{Symmetric SDP formulations}

\label{sec:sym-sdp}
In this section we define a framework for symmetric semidefinite
programming formulations and show that a symmetric SDP formulation implies a symmetric sum of squares representation over a small basis.
Our framework extends the one in
\cite{DBLP:conf/stoc/BraunPZ15} with a symmetry condition; see also
\cite{Lee2014power}.

We first introduce some notation we will use. Let \([n]\) denote the set \(\{1, \ldots, n\}\).
Let \(\psd{r}\) denote the cone of \(r \times r\) real symmetric
positive semidefinite (psd) matrices. Let
\(\R[x]\) denote the set of polynomials in \(n\) real variables \(x = (x_1,\ldots,x_n)\) with real coefficients. For a set \(\cH \subseteq \R[x]\) let \(\vspan{\cH}\) denote the vector space spanned by \(\cH\)
and let \(\ispan{\cH}\) denote the ideal generated by \(\cH\). (Recall that a polynomial \emph{ideal} in a polynomial ring \(R\) is a set that is closed under addition of polynomials in the ideal and closed under multiplication by polynomials in the ring.)

Suppose group \(G\) acts on a set \(X\). The (left) action of \(g \in G\) on \(x \in X\) is denoted \(g \cdot x\). Recall that the \emph{orbit} of \(x \in X\) is \(\setcomp{g \cdot x}{g \in G}\) while the \emph{stabilizer} of \(x\) is \(\setcomp{g \in G}{g \cdot x = x}\). Let \(A_n\) denote the alternating group on \(n\) letters (the set of even permutations of \([n]\)).

We now present our SDP formulation framework. We restrict ourselves to maximization problems
even though the framework extends to minimization problems.
A \emph{maximization problem} \(\cP = (\cS, \cF)\)
consists of a set
\(\cS\) of feasible solutions and a set \(\cF\) of
objective functions.
Suppose we are given two functions \(\tilde{C} \colon \cF \rightarrow \R\) and \(\tilde{S} \colon \cF \rightarrow \R\).
We say an algorithm \((\tilde{C}, \tilde{S})\)-approximately solves \(\cP\)
if for all \(f \in \cF\) with \(\max_{s \in \cS} f(s) \leq \tilde{S}(f)\)
it computes \(\tilde{f}\in \R\) satisfying
\(\max_{s \in \cS} f(s) \leq \tilde{f} \leq \tilde{C}(f)\).
We will refer to \(\tilde{C}\) and \(\tilde{S}\) as the \emph{approximation
guarantees}.

\begin{example}
Suppose we are given a polytope \(P\) with non-redundant inner and outer descriptions:
\[
P = \conv(V) = \setcomp{x}{a_jx \le b_j, j \in [m]}\,.
\]
Let us define \(f_j(x):= b_j-a_jx\) for each \(j \in [m]\).
We can now associate a maximization problem with this polytope by setting
\(\cS = V\) and \(\cF = \setcomp{f_j}{j \in [m]}\), so that each vertex is a solution and the slack with respect to each facet is a function. In order to recover the polytope exactly we would set
\[
\tilde{C}(f) = \tilde{S}(f) = \max_{x \in P} f(x) = 0
\]
for all \(f \in \cF\).
\end{example}

Let \(G\) be a group with associated actions on \(\cS\) and \(\cF\).
The problem \(\cP\) is \emph{\(G\)-symmetric} if the group action
satisfies the compatibility constraint
\((g \cdot f)(g \cdot s) = f(s)\).
For a \(G\)-symmetric problem we require \(G\)-symmetric
approximation guarantees: \(\tilde{C}(g\cdot f) =
\tilde{C}(f)\) and \(\tilde{S}(g\cdot f) = \tilde{S}(f)\) for all \(f \in \cF\) and \(g \in G\). 

We now define the notion of a semidefinite programming
formulation of a maximization problem.
\begin{definition}[SDP formulation for \(\cP\)] Let \(\cP = (\cS,
  \cF)\) be a maximization problem with approximation guarantees \(\tilde{C},
  \tilde{S}\).
  A
  \emph{\((\tilde{C}, \tilde{S})\)-approximate SDP formulation of \(\cP\)}
  of size \(d\)
  consists of a linear map \(\cA \colon \psd{d} \rightarrow \R^k\)
  and \(b \in \R^k\)
  together with
  \begin{enumerate}
  \item \emph{Feasible solutions:}
    an \(X^s \in \psd{d}\) with \(\cA(X^s) = b\)
    for all \(s \in \cS\), i.e., the SDP
    \(\Gset{X \in \psd{d}}{\cA(X) = b}\)
    is a relaxation of \(\conv\face{X^s \mid s \in \cS}\),
  \item \emph{Objective functions:}
    an affine function \(w^f \colon \psd{d} \rightarrow \R\)
	satisfying
    \(w^f(X^s) = f(s)\) 
    for all \(f \in \cF\) with \(\max_{s \in \cS} f(s) \leq \tilde{S}(f)\) and 	all \(s \in \cS\),
    i.e., the linearizations are exact on solutions, and
  \item \emph{Achieving guarantee:}
    \(\max \face{ w^f(X) \mid \cA(X) = b, X \in \psd{d}} \leq \tilde{C}(f)\)
    for all \(f \in \cF\) with \(\max_{s \in \cS} f(s) \leq \tilde{S}(f)\).
  \end{enumerate}
  If \(G\) is a group, \(\cP\) is \(G\)-symmetric,
  and \(G\) acts on \(\psd{d}\),
  then an SDP formulation of \(\cP\)
  with symmetric approximation guarantees \(\tilde{C}, \tilde{S}\)
  is \emph{\(G\)-symmetric}
  if it additionally satisfies
the compatibility conditions for all \(g \in G\): 
\begin{enumerate}
\item \emph{Action on solutions:} \(X^{g \cdot s} = g \cdot X^s\) for
  all \(s \in \cS\).
\item \emph{Action on functions:} \(w^{g \cdot f}(g \cdot X) =
  w^f (X)\) for all \(f \in \cF\) with \(\max_{s
      \in \cS} f(s) \leq \tilde{S}(f)\). 
\item \emph{Invariant affine space:} \(\cA(g  \cdot X) =
  \cA(X)\). 
\end{enumerate}
A \(G\)-symmetric SDP formulation is
\emph{\(G\)-coordinate-symmetric}
if the action of \(G\) on \(\psd{d}\)
is by permutation of coordinates:
that is, there is an action of \(G\) on \([d]\)
with \((g \cdot X)_{ij} = X_{g^{-1} \cdot i, g^{-1} \cdot j}\) for all \(X \in \psd{d}\),
\(i, j \in [d]\) and \(g \in G\).
\end{definition}

We now turn a \(G\)-coordinate-symmetric SDP formulation
into a symmetric sum of squares representation over a small set of basis functions.

\begin{lemma}[Sum of squares for a symmetric SDP
  formulation] \label{lem:factorSymSDP} 
  If 
  a \(G\)-symmetric maximization problem
  \(\cP = (\cS, \cF)\)
  admits a \(G\)-coordinate-symmetric \((\tilde{C}, \tilde{S})\)-approximate
  SDP formulation of size \(d\),
  then there is a set \(\cH\) of
  at most \(\binom{d+1}{2}\) functions
  \(h \colon \cS \to \R\)
  such that for any \(f \in \cF\) with \(\max f \leq \tilde{S}(f)\)
  we have \(\tilde{C}(f) - f = \sum_{j} h_{j}^{2} + \mu_{f}\)
  for some \(h_{j} \in \vspan{\cH}\) and constant \(\mu_{f} \geq 0\).
  Furthermore the set \(\cH\) is invariant under the action of \(G\) given
  by \((g \cdot h) (s) = h (g^{-1} \cdot s)\)
  for \(g \in G\), \(h \in H\) and \(s \in S\).
\end{lemma}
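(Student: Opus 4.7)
The plan is to extract the sum-of-squares certificate from the SDP formulation in two moves: a duality step that replaces the approximation guarantee by a PSD slack certificate, and a factorization step using the positive semidefinite square root of \(X^s\) that automatically produces a small, \(G\)-invariant basis.

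I would first invoke SDP duality. The approximation guarantee says the affine function \(\tilde C(f) - w^f(X)\) is nonnegative on \(\{X \in \psd{d} : \cA(X) = b\}\). By the standard conic duality for SDPs (perturbing the right-hand side slightly, or passing to the relative interior of the affine slice, should Slater's condition fail directly), there exist \(Y^f \in \psd{d}\), a dual vector \(\lambda^f\), and a constant \(\mu_f \geq 0\) with
\[
  \tilde C(f) - w^f(X) = \iprod{Y^f, X} + \mu_f - \iprod{\lambda^f, \cA(X) - b}
\]
as an identity of affine functions of \(X\). Evaluating at \(X = X^s\), the last term vanishes because \(\cA(X^s)=b\), and \(w^f(X^s) = f(s)\), so \(\tilde C(f) - f(s) = \iprod{Y^f, X^s} + \mu_f\) for every \(s \in \cS\).

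Next I would introduce the basis \(\cH\) via the unique positive semidefinite square root \(A^s := (X^s)^{1/2}\), which satisfies \((A^s)^2 = X^s\) and is itself symmetric psd. The key property is orthogonal-equivariance of the square root: if \(g \in G\) acts on \(\psd{d}\) by a permutation matrix \(P_g\), then \(X^{g \cdot s} = P_g X^s P_g^{\top}\) gives \(A^{g \cdot s} = P_g A^s P_g^{\top}\), i.e.\ \(A^{g\cdot s}_{ij} = A^s_{g^{-1}\cdot i,\, g^{-1} \cdot j}\). Define \(h_{ij}(s) := A^s_{ij}\) for \(1 \leq i, j \leq d\) and let \(\cH := \{h_{ij}\}\). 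Because \(A^s\) is symmetric, \(h_{ij} = h_{ji}\) as functions on \(\cS\), so \(|\cH| \leq \binom{d+1}{2}\); and the equivariance above yields \(g \cdot h_{ij} = h_{g\cdot i,\, g\cdot j} \in \cH\), so \(\cH\) is \(G\)-invariant under the action \((g \cdot h)(s) = h(g^{-1} \cdot s)\).

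To finish, I would write \(Y^f = \sum_\ell y_\ell y_\ell^{\top}\) with \(y_\ell \in \R^d\) and unfold
\[
  \iprod{Y^f, X^s} = \Tr\!\bigl(Y^f (A^s)^2\bigr) = \sum_\ell y_\ell^{\top} (A^s)^2 y_\ell = \sum_{k,\ell} \Bigl(\sum_{i} A^s_{ki}\,(y_\ell)_i\Bigr)^{2}.
\]
Each inner sum \(h^f_{k\ell}(s) := \sum_i (y_\ell)_i\, h_{ki}(s)\) is a fixed real linear combination (with \(s\)-independent coefficients \((y_\ell)_i\)) of functions in \(\cH\), hence belongs to \(\vspan{\cH}\). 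Combined with the slack identity of the duality step this gives \(\tilde C(f) - f = \sum_{k,\ell} (h^f_{k\ell})^2 + \mu_f\), completing the proof. The step I expect to be the main obstacle is the duality invocation: if every \(X^s\) is rank-deficient in a way that makes strong duality fail, one must carefully perturb the right-hand side \(b\) or restrict to the relative interior of the feasible set to guarantee that the PSD dual certificate \((Y^f, \mu_f)\) exists; everything else is direct computation, and the \(G\)-invariance of \(\cH\) falls out cleanly from the uniqueness and orthogonal-equivariance of the PSD square root.
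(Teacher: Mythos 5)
Your proposal is correct and follows essentially the same approach as the paper: SDP strong duality to obtain the certificate $\tilde C(f)-f(s)=\iprod{U^f,X^s}+\mu_f$, the unique psd square root of $X^s$ to define the basis $\cH$ (with uniqueness giving $G$-equivariance and symmetry giving the $\binom{d+1}{2}$ bound), and a factorization of the trace term into squares of functions in $\vspan{\cH}$. The only cosmetic difference is that you expand $\iprod{Y^f,X^s}$ via a rank-one decomposition $Y^f=\sum_\ell y_\ell y_\ell^\top$, while the paper uses $\sqrt{U^f}$ and writes the trace as $\sum_{i,j}\bigl(\sum_k \sqrt{U^f}_{ik}\,h_{kj}\bigr)^2$; these are interchangeable.
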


\ifnum\longform=1
	\ifnum\longform=0
	\section{Sum of squares for symmetric SDP formulations}
	\label{sec:sos-sym-sdp-proof}
	In this section we prove \prettyref{lem:factorSymSDP}.
\fi

\begin{proof}
For any psd matrix \(M\) let \(\sqrt{M}\) denote the unique psd matrix
with \(\sqrt{M}^{2} = M\). Note that \(\sqrt{M} \sqrt{M}^{\intercal} = M\) also, since \(\sqrt{M}\) is symmetric.

Let \(\cA\), \(b\), \(\setcomp{X^{s}}{s \in \cS}\), \(\setcomp{w^{f}}{f \in \cF}\) comprise
a \(G\)-coordinate-symmetric SDP formulation of size \(d\).
We define the set \(\cH \coloneqq \Gset{h_{ij}}{i,j \in [d]}\)
via \(h_{ij}(s) \coloneqq \sqrt{X^{s}}_{ij}\).
By the action of \(G\) and the uniqueness of the square root,
we have \(g \cdot h_{ij} = h_{g \cdot i, g \cdot j}\), so \(\cH\) is \(G\)-symmetric.
As \(h_{ij} = h_{ji}\), the set \(\cH\) has at most
\(\binom{d+1}{2}\) elements.

By standard strong duality arguments as in \cite{DBLP:conf/stoc/BraunPZ15},
for every \(f \in \cF\) with \(\max f \leq \tilde{S}(f)\),
there is a \(U^{f} \in \psd{d}\) and \(\mu_{f} \geq 0\)
such that for all \(s \in \cS\),
\[\tilde{C}(f) - f(s) = \Tr[U^{f} X^{s}] + \mu_{f}.\]
Again by standard arguments the trace can be rewritten as a sum of
squares:
\begin{equation*}
\Tr[U^{f} X^{s}]
= \Tr\left[\left(\sqrt{U^{f}}\sqrt{X^{s}}\right)^\intercal
\left(\sqrt{U^{f}}\sqrt{X^{s}}\right)\right]
= \sum_{i,j \in [d]} \left(
    \sum_{k \in [d]} \sqrt{U^{f}}_{ik} \cdot \sqrt{X^{s}}_{kj} \right)^{2}.
\end{equation*}
Therefore
\(\tilde{C}(f) - f = \sum_{i,j \in [d]} \left(
  \sum_{k \in [d]} \sqrt{U^{f}}_{ik} \cdot h_{kj}
\right)^{2}
+ \mu_{f}\),
as claimed.
\end{proof}

\else
	The proof appears in \prettyref{sec:sos-sym-sdp-proof}.
\fi

\section{The perfect matching problem}

We now present the \emph{perfect matching problem}
\(\perfectmatching{n}\)
as a maximization problem in the framework of
\prettyref{sec:sym-sdp} and show that any symmetric SDP formulation
has exponential size.

Let \(n\) be an even positive integer, and
let \(K_{n}\) denote the complete graph on \(n\) vertices.
The feasible solutions of \(\perfectmatching{n}\) are all the
perfect matchings \(M\) on \(K_{n}\).
The objective functions \(f_F\) are indexed by the edge sets \(F\) of \(K_{n}\): for each \(F \subseteq \binom{n}{2}\) we define \(f_F(M) \coloneqq \size{M \cap F}\).
For approximation guarantees we use \(\tilde{S}(f) \coloneqq \max f\)
and \(\tilde{C}(f) \coloneqq \max f + \varepsilon / 2\)
for some fixed \(0 \leq \varepsilon < 1\)
as in \cite{DBLP:conf/soda/BraunP15}; see also \cite{DBLP:journals/tit/BraunP15} for a more in-depth discussion.

\subsection{Symmetric functions on matchings are juntas}
\label{sec:theta-main}
In this section we show that functions on perfect matchings with high symmetry
are actually \emph{juntas}:
they depend only on the edges of a small vertex set.
The key is the following lemma
stating that perfect matchings coinciding on a vertex set
belong to the same orbit of the pointwise stabilizer of the vertex set.
Let \(A_n\) denote the alternating group on \(n\) letters, and for any subset \(X \subseteq [n]\) let \(A(X)\) denote the alternating group that operates on the elements of \(X\) and fixes the remaining elements of \([n]\).
For any set \(W \subseteq [n]\) let \(E[W]\)
denote the edges of \(K_{n}\)
with both endpoints in \(W\).

\begin{lemma}
\label{lem:evenPerm}
  Let $S\subseteq [n]$ with $\size{S} < n/2$ and
  let $M_1$ and $M_2$ be perfect matchings in $K_n$.
  If $M_{1} \cap E[S] = M_{2} \cap E[S]$ then
  there exists $\sigma\in A([n] \setminus S)$ such that
  $\sigma \cdot M_{1} = M_{2}$.
\end{lemma}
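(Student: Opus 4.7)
The plan is to build the permutation $\sigma$ edge by edge, first using the edges of $M_1$ and $M_2$ that cross the cut between $S$ and $T := [n] \setminus S$, then extending across the rest of $T$, and finally repairing parity using a matching edge entirely inside $T$. The hypothesis $\size{S} < n/2$ will be used exactly once, but crucially, to guarantee that such a ``repair edge'' exists.

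First I would partition $S$ into $S_0$, the vertices covered by $M_1 \cap E[S] = M_2 \cap E[S]$, and $S_1 := S \setminus S_0$. For $v \in S_1$ let $\phi_i(v) \in T$ denote the $M_i$-partner of $v$; both $\phi_1$ and $\phi_2$ are injections into $T$. Define $\sigma$ on $\phi_1(S_1)$ by $\sigma(\phi_1(v)) := \phi_2(v)$; this is a well-defined bijection $\phi_1(S_1) \to \phi_2(S_1)$ and, by construction, sends each crossing edge $\{v, \phi_1(v)\}$ of $M_1$ to the corresponding crossing edge $\{v, \phi_2(v)\}$ of $M_2$ while fixing $S$ pointwise.

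Next I would extend $\sigma$ to $T \setminus \phi_1(S_1)$. The vertices of $T$ not in $\phi_1(S_1)$ are exactly those matched within $T$ by $M_1$, and similarly for $M_2$ with $\phi_2(S_1)$. The restrictions $M_1 \cap E[T \setminus \phi_1(S_1)]$ and $M_2 \cap E[T \setminus \phi_2(S_1)]$ are perfect matchings on equinumerous vertex sets, so I can pick any bijection between them that carries one matching onto the other (match edges arbitrarily, then choose one of the two orientations for each edge). Combined with the earlier definition on $\phi_1(S_1)$, this yields a permutation $\sigma$ of $T$ with $\sigma \cdot M_1 = M_2$.

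The remaining step, and the one that needs the size hypothesis, is to arrange that $\sigma$ is even. If $\sigma$ is already in $A(T)$ we are done; otherwise I would post-compose with a transposition $(u\;v)$ where $\{u,v\} \in M_1 \cap E[T]$. Such an edge exists because $\size{T} = n - \size{S} > n/2 > \size{S} \geq \size{S_1}$, so strictly more than half of $T$ is unmatched to $S$ in $M_1$ and hence $M_1 \cap E[T]$ is nonempty. Replacing $\sigma$ with $\sigma \circ (u\;v)$ flips parity while preserving $\sigma \cdot M_1 = M_2$, since $\{\sigma(u),\sigma(v)\}$ is still the image edge in $M_2$. The main subtlety in the proof is thus not the construction of \emph{some} permutation --- that is routine --- but ensuring it can be taken to be \emph{even}, and the inequality $\size{S} < n/2$ is exactly what is needed to furnish the transposition that corrects parity.
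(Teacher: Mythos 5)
Your proposal is correct and follows essentially the same three-stage construction as the paper's proof (agree on crossing edges, arbitrary bijection on internal $T$-edges, then post-compose with a transposition inside an $M_1$-edge of $E[T]$ to fix parity); the paper instead pre-composes with a transposition taken from an $M_2$-edge disjoint from $S$, which is a mirror-image of the same idea. One small wording issue: "strictly more than half of $T$ is unmatched to $S$" isn't always true under $\size{S} < n/2$, but what you actually need and do have is $\size{T} - \size{S_1} \geq n - 2\size{S} \geq 2$ (using that $n$ is even), which already guarantees the repair edge exists.
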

\begin{proof}
Let \(\delta(S)\) denote the edges with exactly one endpoint in \(S\).
There are three kinds of edges: those in \(E[S]\), those in \(\delta(S)\), and those disjoint from \(S\). We construct \(\sigma\) to handle each type of edge, then fix \(\sigma\) to be even.

To handle the edges in \(E[S]\) we set \(\sigma\) to the identity on \(S\), since \(M_1 \cap E[S] = M_2 \cap E[S]\).

To handle the edges in \(\delta(S)\) we note that for each edge \((s,v) \in M_1\)
with \(s \in S\) and \(v \notin S\)
there is a unique edge \((s,w) \in M_2\) with
\(w \notin S\). We extend \(\sigma\) to map \(v\) to \(w\) for each such \(s\).

To handle the edges disjoint from \(S\), we again use the fact that \(M_1\) and \(M_2\) are perfect matchings, so the number of edges in each that are disjoint from \(S\) is the same. We extend \(\sigma\) to be an arbitrary bijection on those edges.

We now show that we can choose \(\sigma\) to be even.
Since \(\size{S} < n/2\) there is an edge \((u,v) \in M_2\) disjoint from \(S\). Let $\tau_{u,v}$ denote the
transposition of $u$ and $v$ and
let $\sigma' \coloneqq  \tau_{u,v} \circ \sigma$.
We have $\sigma' \cdot M_1 = \sigma \cdot M_1 = M_2$,
and either $\sigma$ or $\sigma'$ is even.
\end{proof}

We also need the following lemma, which has been
used extensively for symmetric linear extended formulations. See references \cite{Yannakakis88,Yannakakis91,KaibelPashkovichTheis10,BP2011,Lee2014power} for examples.

\begin{lemma}[{\cite[Theorems~5.2A and 5.2B]{DixonBook}}]
\label{lem:DixonMortimer}
Let \(n \ge 10\) and let \(G \leq A_n\) be a group. If \(\card{A_n : G} < \binom{n}{k}\)
for some \(k < n/2\), then
there is a subset \(W \subseteq [n]\) such that \(\size{W} < k\),
  \(W\) is \(G\)-invariant, and
  \(A([n] \setminus W)\) is a subgroup of \(G\).
\end{lemma}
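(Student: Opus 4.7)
The plan is to deduce the lemma by analyzing the orbit structure of $G$ acting on $[n]$ and leveraging the classical bounds on indices of proper subgroups of $A_n$ developed in Dixon and Mortimer's book. The overall idea is that a small-index subgroup of $A_n$ cannot scramble the points of $[n]$ much; some large subset must be left essentially unmoved at the set level, and on its complement $G$ must act as the full alternating group.

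First I would identify $W$. Let the orbits of $G$ on $[n]$ have sizes $n_1 \geq n_2 \geq \cdots \geq n_r$. Since $G$ embeds into $(\prod_i S_{n_i}) \cap A_n$, a standard multinomial count gives a lower bound on $[A_n : G]$ in terms of the orbit sizes. The hypothesis $[A_n : G] < \binom{n}{k}$ then forces the largest orbit $\mathcal{O}_1$ to be large: specifically $n - n_1 < k$, since otherwise the multinomial coefficient $\binom{n}{n_1,\ldots,n_r}$ alone would already exceed the allowed index. Setting $W := [n] \setminus \mathcal{O}_1$ produces a $G$-invariant subset (as $W$ is a union of $G$-orbits) with $\size{W} < k$, which is the desired candidate.

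Next I would establish $A([n] \setminus W) \leq G$ by analyzing the restriction $G|_{\mathcal{O}_1}$, which is a transitive subgroup of $\operatorname{Sym}(\mathcal{O}_1) \cong S_m$ for $m = n - \size{W} > n/2$. Dixon--Mortimer Theorem~5.2A (Bochert's bound) asserts that every primitive subgroup of $S_m$ not containing $A_m$ has index at least $\lfloor (m+1)/2 \rfloor!$, and Theorem~5.2B extends the analysis to imprimitive subgroups via block decompositions. For $n \geq 10$ and $k < n/2$ these index lower bounds exceed the budget $\binom{n}{k}$ permits after accounting for the action on $W$, so $G|_{\mathcal{O}_1}$ must contain $A(\mathcal{O}_1)$. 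Lifting back, $G$ contains the subgroup of $A_n$ fixing $W$ pointwise and acting as $A(\mathcal{O}_1)$ on $\mathcal{O}_1$, which is precisely $A([n] \setminus W)$.

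The main obstacle is the transitive-but-imprimitive case, where $G|_{\mathcal{O}_1}$ admits a non-trivial block system and embeds in a wreath product; bounding the index in that regime requires the detailed case analysis underlying Theorem~5.2B. Since the statement of the lemma is a textbook result, my proposal is to invoke Dixon--Mortimer Theorems~5.2A and 5.2B directly rather than reproduce the Bochert-style primitive-group structure theorems from scratch; the only novel content is the routine translation of their index thresholds into the convenient form $\binom{n}{k}$ used throughout the rest of the paper.
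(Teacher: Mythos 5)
The paper itself offers no proof of this lemma: it is cited verbatim from Dixon and Mortimer's textbook, as the theorem annotation \([{\cite[Theorems~5.2A and 5.2B]{DixonBook}}]\) indicates, and your proposal likewise ends by invoking those theorems directly. So the "comparison" is really between your sketch of how 5.2A/5.2B are established and the fact that the paper does not unpack them at all; to that extent you and the paper agree.

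Two corrections and one genuine gap in the sketch itself. First, you misattribute the references: Bochert's index bound (every primitive subgroup of \(S_m\) not containing \(A_m\) has index at least \(\lfloor (m+1)/2 \rfloor!\)) is Theorem 3.3B in Dixon--Mortimer, not 5.2A. Theorem 5.2A is already essentially the statement of the lemma (a classification of subgroups of \(A_n\) of index below \(\binom{n}{r}\) into the pointwise-stabilizer case, one imprimitive exception, and finitely many small-\(n\) exceptions), and Theorem 5.2B is the parallel statement for \(S_n\) rather than a treatment of the imprimitive subcase. Second, and more substantively, your ``lifting back'' step is not justified: knowing that \(G\) acts on \(\mathcal{O}_1\) as a group containing \(A(\mathcal{O}_1)\) does not by itself yield \(A([n]\setminus W)\leq G\), because the elements of \(G\) that realize those alternating permutations on \(\mathcal{O}_1\) might act nontrivially on \(W\). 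To repair this, note that the pointwise stabilizer \(K=G_{(W)}\) is normal in \(G\) of index at most \(|W|!\), so \(K|_{\mathcal{O}_1}\) is a normal subgroup of \(G|_{\mathcal{O}_1}\supseteq A(\mathcal{O}_1)\) of index at most \(|W|!\); since \(A_m\) is simple for \(m=|\mathcal{O}_1|\geq 5\) and \(|W|!<m!/2\), this forces \(K|_{\mathcal{O}_1}\supseteq A(\mathcal{O}_1)\), giving \(A([n]\setminus W)\leq K\leq G\). Without that normality-plus-simplicity argument the sketch does not deliver the stated conclusion. Finally, be aware that case~(ii) of Theorem~5.2A (the two-block imprimitive subgroup of \(A_{2m}\), with index \(\frac12\binom{2m}{m}<\binom{2m}{m-1}\)) is a genuine counterexample to the lemma when \(k=n/2-1\); the paper only ever invokes the lemma with \(k=\lceil\beta n\rceil\) for a small constant \(\beta\), where this case is vacuous, but your multinomial argument for \(n-n_1<k\) should be scoped accordingly.
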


We now formally state and prove the claim about juntas:

\begin{proposition}
  \label{prop:matching-junta}
  Let \(n \geq 10\), let \(k < n/2\)
  and let \(\cH\) be an \(A_{n}\)-symmetric set of functions
  on the set of perfect matchings of \(K_{n}\)
  of size less than \(\binom{n}{k}\).
  Then for every \(h \in \cH\) there is a vertex set
  \(W \subseteq [n]\) of size less than \(k\)
  such that \(h\) depends only on the
  (at most \(\binom{k-1}{2}\))
  edges in \(W\).
\begin{proof}
Let \(h \in \cH\), let \(\operatorname{Stab}(h)\) denote the stabilizer of \(h\), and let \(\operatorname{Orb}(h)\) denote the orbit of \(h\). Since \(\cH\) is \(A_n\)-symmetric we have \(\card{\operatorname{Orb}(h)} < \binom{n}{k}\). By the orbit-stabilizer theorem it follows that
\(\card{A_n : \operatorname{Stab}(h)} < \binom{n}{k}\).
Applying \prettyref{lem:DixonMortimer} to
the stabilizer of \(h\),
we obtain a subset \(W \subseteq [n]\) of size
less than \(k\)
such that \(h\) is stabilized by \(A([n] \setminus W)\),
i.e.,
\[h(M) = (g \cdot h) (M) = h (g^{-1} \cdot M)\]
for all \(g \in A([n] \setminus W)\). 

Therefore for every perfect matching \(M\)
the function \(h\) is constant on the
\(A([n] \setminus W)\)-orbit of \(M\).
As the orbit is determined by \(M \cap E[W]\)
by \prettyref{lem:evenPerm},
so is the function value \(h(M)\).
Therefore \(h\) depends only
on the edges in \(E[W]\).
\end{proof}
\end{proposition}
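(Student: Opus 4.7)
The plan is to feed the hypothesis into the orbit-stabilizer theorem, apply \prettyref{lem:DixonMortimer} to the stabilizer of a fixed \(h\), and then convert the resulting permutation symmetry into a junta statement via \prettyref{lem:evenPerm}. Concretely, fix \(h \in \cH\). Since \(A_n\) acts on \(\cH\), the orbit of \(h\) is contained in \(\cH\) and so has size less than \(\binom{n}{k}\). By orbit-stabilizer, the index \(\card{A_n : \operatorname{Stab}(h)}\) equals this orbit size and is therefore strictly less than \(\binom{n}{k}\).

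This is exactly the hypothesis of \prettyref{lem:DixonMortimer} applied to the subgroup \(\operatorname{Stab}(h) \leq A_n\), which produces a \(\operatorname{Stab}(h)\)-invariant set \(W \subseteq [n]\) of size less than \(k\) such that the pointwise stabilizer \(A([n] \setminus W)\) is contained in \(\operatorname{Stab}(h)\). The inclusion is the useful part: every even permutation fixing \(W\) pointwise leaves \(h\) unchanged, so \(h(\sigma \cdot M) = h(M)\) for every perfect matching \(M\) and every \(\sigma \in A([n] \setminus W)\).

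To turn this into dependence only on the edges inside \(W\), I would invoke \prettyref{lem:evenPerm} with \(S = W\); the hypothesis \(\size{W} < k < n/2\) is satisfied. If \(M_1\) and \(M_2\) are any two perfect matchings of \(K_n\) with \(M_1 \cap E[W] = M_2 \cap E[W]\), the lemma yields \(\sigma \in A([n] \setminus W)\) with \(\sigma \cdot M_1 = M_2\), and then the stabilizer property gives \(h(M_1) = h(\sigma \cdot M_1) = h(M_2)\). Hence \(h\) factors through the restriction \(M \mapsto M \cap E[W]\), and the number of edges in \(E[W]\) is at most \(\binom{\size{W}}{2} \leq \binom{k-1}{2}\).

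The only real subtlety I anticipate is keeping the group action conventions straight: the action on \(\cH\) is \((g \cdot h)(s) = h(g^{-1} \cdot s)\), so invariance of \(h\) under a subgroup translates into \(h\) being constant on matching orbits of that subgroup, and it is easy to misplace the inverse. Once that bookkeeping is lined up, the argument is just the three-step chain orbit-stabilizer \(\to\) \prettyref{lem:DixonMortimer} \(\to\) \prettyref{lem:evenPerm}, with the size thresholds \(\binom{n}{k}\) and \(n/2\) in the two lemmas matched precisely to the hypothesis \(k < n/2\) and \(\size{\cH} < \binom{n}{k}\).
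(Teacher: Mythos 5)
Your proof matches the paper's argument step for step: orbit-stabilizer to bound the index of \(\operatorname{Stab}(h)\), \prettyref{lem:DixonMortimer} to extract the small invariant set \(W\), and \prettyref{lem:evenPerm} to conclude that \(h\) factors through \(M \mapsto M \cap E[W]\). The bookkeeping about the inverse in the action convention is exactly what the paper does as well, so there is nothing to add.
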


\subsection{The matching polynomials}

A key step in proving our lower bound is obtaining low-degree derivations of
approximation guarantees for objective functions of
\(\perfectmatching{n}\).
Therefore we start with a standard representation of functions as
polynomials. We define the \emph{matching constraint polynomials} \(\cP_{n}\) as:
\begin{equation}
  \label{eq:matchingconstraint}
  \begin{split}
  \mathcal{P}_{n} \coloneqq &\face{x_{uv} x_{uw} \mid u,v,w \in
  [n] \text{ distinct}} \\
  &\cup \face{\sum_{u \in [n], u \neq v} x_{uv} -
  1 \mathrel{}\middle|\mathrel{} v \in [n]} \\
  &\cup \face{x_{uv}^{2} - x_{uv} \mid u,v \in
  [n] \text{ distinct}}.
  \end{split}
\end{equation}
We observe that the ring of real valued functions on perfect matchings is isomorphic to
\(\R[\{x_{uv}\}_{\{u,v\} \in \binom{[n]}{2}}] / \ispan{\cP_{n}}\)
with \(x_{uv}\) representing the indicator function
of the edge \(uv\) being contained in a perfect matching.
Intuitively, under this representation the vanishing of the first set of polynomials ensures that no vertex is matched more
than once, the vanishing of the second set ensures that each vertex is matched, and the
vanishing of the third set ensures that each edge coordinate is 0-1 valued. 

Now we formulate low-degree derivations.
Let $\mathcal{P}$ denote a set of polynomials in
$\mathbb{R}[x]$. For polynomials $F$ and $G$, we write $F
\same{(\mathcal{P},d)} G$, or \emph{$F$ is congruent to $G$ from
$\mathcal{P}$ in degree $d$}, if and only if there exist polynomials
$\{q(p): p \in \mathcal{P}\}$ such that
\[F + \sum_{p \in \mathcal{P}} q(p)\cdot p = G\]
and $\max_{p} \deg(q(p) \cdot p) \leq d$.
We often drop the dependence
on $\mathcal{P}$ when it is clear from context.
We shall write \(F \feq G\) for two polynomials \(F\) and \(G\)
defining the same function on perfect matchings,
i.e., \(F - G \in \ispan{\cP_{n}}\).
(Note that as \(\cP_{n}\) contains \(x_{uv}^{2} - x_{uv}\)
for all variables \(x_{uv}\),
the ideal generated by \(\cP_{n}\) is automatically radical.)

\subsection{Deriving that symmetrized polynomials are constant}
\label{sec:sym-poly-constant}

Averaging any polynomial on matchings over the symmetric group gives a constant. In this section we show that this fact has a low-degree derivation.

For a partial matching \(M\),
let \(x_{M} \coloneqq \prod_{e \in M} x_{e}\)
denote the product of edge variables for the edges in \(M\).
The first step is to reduce every polynomial to a linear combination
of the \(x_{M}\).
\begin{lemma}
  \label{lem:monomials}
  For every polynomial \(F\) there is a polynomial
  \(F'\) with \(\deg F' \leq \deg F\)
  and \(F \same{(\mathcal{P}_n,\deg F)} F'\),
  where all monomials of \(F'\) have the form \(x_{M}\)
  for some partial matching \(M\).
\end{lemma}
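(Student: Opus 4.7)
The plan is to work monomial by monomial: since the relation $\same{(\mathcal{P}_n, \deg F)}$ is $\R$-linear on both sides, it suffices to show that every monomial $m$ of $F$ satisfies $m \same{(\mathcal{P}_n, \deg m)} m'$ for some $m'$ that is either $0$ or of the form $x_{M}$ for a partial matching $M$ of size at most $\deg m$. Summing these reductions over the monomials of $F$ then yields $F'$ with $\deg F' \leq \deg F$.

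Fix such a monomial $m = \prod_{e} x_{e}^{a_{e}}$ and let $S \coloneqq \face{e : a_{e} > 0}$ be its edge support. The plan splits into two cases based on whether $S$ is a partial matching.

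\textbf{Case 1: two edges of $S$ share a vertex.} Say $uv, uw \in S$ with $u,v,w$ distinct. Then we can factor $m = m'' \cdot x_{uv} x_{uw}$ for some monomial $m''$ of degree $\deg m - 2$. Since $x_{uv} x_{uw} \in \mathcal{P}_{n}$ and the derivation term $m'' \cdot (x_{uv} x_{uw})$ has degree exactly $\deg m$, this witnesses $m \same{(\mathcal{P}_n,\deg m)} 0$.

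\textbf{Case 2: $S$ is a partial matching.} Here we want to reduce each $x_{e}^{a_{e}}$ to $x_{e}$ using the constraints $x_{e}^{2} - x_{e} \in \mathcal{P}_{n}$. For any $a \geq 2$ we have the identity
\[
x_{e}^{a} = x_{e}^{a-2}\bigl(x_{e}^{2} - x_{e}\bigr) + x_{e}^{a-1},
\]
and applied inside $m$ (keeping the factors for the other edges of $S$ untouched), the quadratic-constraint term has the form $\bigl(\text{monomial of degree } \deg m - 2\bigr) \cdot (x_{e}^{2} - x_{e})$, which has degree $\deg m$. Iterating this reduction for each $e \in S$ until every exponent is $1$, we obtain $m \same{(\mathcal{P}_n,\deg m)} x_{S}$. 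Since $\size{S} \leq \deg m$, this $x_{S}$ is of the desired form.

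The main thing to be careful about is simply the bookkeeping of derivation degrees across the iterated reductions in Case 2; but because each elementary step strips one factor of $x_{e}$ from the monomial without creating factors involving new edges, the maximum degree of any $q(p)\cdot p$ never exceeds the degree of the monomial we started with. No other polynomial from $\mathcal{P}_{n}$ (in particular, no vertex constraint $\sum_{u \neq v} x_{uv} - 1$) is needed for this lemma.
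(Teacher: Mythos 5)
Your proposal is correct and follows essentially the same approach as the paper: reduce to a single monomial by linearity, use the generators $x_{e}^{2}-x_{e}$ to drop exponents to $1$, and use the generators $x_{uv}x_{uw}$ to kill monomials whose support is not a partial matching. The only cosmetic difference is the order of the two steps --- the paper first reduces all exponents and then kills the non-matching multilinear monomials, whereas you branch on whether the support is a matching before touching the exponents --- and both orderings give the same degree bound $\deg F$.
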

\begin{proof}
It suffices to prove the lemma when \(F\) is a monomial. Let
\(F = \prod_{e \in A} x_{e}^{k_{e}}\)
for a set \(A\) of edges with multiplicities \(k_{e} \geq 1\).
From \(x_{e}^{2} \same{2} x_{e}\) it follows that
\(x_{e}^{k} \same{k} x_{e}\) for all \(k \geq 1\),
hence \(F \same{\deg F} \prod_{e \in A} x_{e}\).
If \(A\) is a partial matching we are done, otherwise there are distinct \(e, f \in A\) with a common vertex,
hence \(x_{e} x_{f} \same{2} 0\) and \(F \same{\deg F}
0\).
\end{proof}

\begin{lemma}
  \label{lem:matching+a}
  For any partial matching \(M\) on \(2d\) vertices
  and a vertex \(a\) not covered by \(M\),
  we have
  \begin{equation}
    \label{eq:matching+a}
    x_{M}
    \same{(\mathcal{P}_n,d+1)}
    \sum_{\substack{M_{1} = M \cup \{a,u\} \\
        u \in K_{n} \setminus (M \cup \{a\})}}
    x_{M_{1}}
    .
  \end{equation}
\end{lemma}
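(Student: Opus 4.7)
The plan is to multiply the linear constraint
\(p_a := \sum_{u \neq a} x_{au} - 1 \in \cP_n\)
(encoding that \(a\) is matched exactly once) by \(x_M\), and then to use the quadratic
constraints \(x_{au} x_{uv} \in \cP_n\) (encoding that two edges sharing an endpoint
cannot both occur) to kill off those terms in which the new \(au\) edge conflicts with
an edge already in \(M\).

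Concretely, the first step is to multiply \(p_a\) by \(x_M\). This yields the
single-step congruence \(x_M \cdot \sum_{u \neq a} x_{au} \same{d+1} x_M\), with
certificate polynomial \(x_M\) of degree \(d\) multiplying \(p_a\) of degree \(1\),
for a total degree of \(d+1\), exactly what is allowed.

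Next I split the sum \(\sum_{u \neq a} x_M \cdot x_{au}\) according to whether
\(u\) is covered by \(M\). For each \(u\) not covered by \(M\) (and \(u \neq a\)),
the product \(x_M \cdot x_{au}\) is precisely \(x_{M_1}\) with
\(M_1 = M \cup \{a,u\}\), contributing the right-hand side of
\prettyref{eq:matching+a}. For each \(u\) covered by \(M\), let \(v\) be its
partner in \(M\) and write \(M\) as the edge \(uv\) together with a sub-matching
\(M'\) on the remaining \(2d-2\) vertices, so that \(x_M = x_{uv} \cdot x_{M'}\).
Then \(x_M \cdot x_{au} = x_{M'} \cdot (x_{au} x_{uv})\), and since \(a\) is not
covered by \(M\) the three vertices \(a, u, v\) are distinct, so the quadratic
\(x_{au} x_{uv}\) is a generator of \(\cP_n\). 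The certificate \(x_{M'}\) has
degree \(d-1\), and multiplied by the degree-\(2\) constraint gives total
degree \(d+1\); hence \(x_M \cdot x_{au} \same{d+1} 0\) for every such \(u\).

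Combining the two steps produces the claimed identity with every individual
certificate of degree exactly \(d+1\). The only non-trivial aspect is the
bookkeeping verifying that each certificate stays within the allowed degree
budget; no ideas beyond the defining constraints of
\prettyref{eq:matchingconstraint} are needed.
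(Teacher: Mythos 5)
Your proof is correct and takes essentially the same route as the paper: multiply \(x_M\) by the generator \(\sum_u x_{au}-1\) to introduce edges at \(a\), then use the quadratic generators \(x_{au}x_{uv}\) (with certificates \(x_{M'}\) of degree \(d-1\)) to eliminate the terms where \(u\) is already covered by \(M\). The paper states this in a single two-step display; you have simply spelled out the split of the sum and the degree bookkeeping more explicitly, with no substantive difference.
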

\begin{proof}
We use the generators \(\sum_{u} x_{au} - 1\)
to add variables corresponding to edges at \(a\),
and then use \(x_{au} x_{uv}\) to remove monomials
not corresponding to a partial matching:
\begin{equation*}
  x_{M}
  \same{(\mathcal{P}_n, d+1)}
  x_{M} \sum_{u \in K_n} x_{au}
  \same{(\mathcal{P}_n,d+1)}
  \sum_{\substack{M_{1} = M \cup \{a,u\} \\
      u \in K_{n} \setminus (M \cup \{a\})}}
  x_{M_{1}}
  .
\end{equation*}
\end{proof}

This leads to a similar congruence using all containing
matchings of a larger size:
\begin{lemma}
  \label{lem:partial-matching}
  For any partial matching \(M\) of \(2d\) vertices
  and \(d \leq k \leq n/2\),
  we have
  \begin{equation}
    \label{eq:partial-matching}
    x_{M} \same{(\mathcal{P}_n,k)}
    \frac{1}{\binom{n/2 - d}{k - d}}
    \sum_{\substack{M' \supset M \\ \size{M'} = k}} x_{M'}
  \end{equation}
\end{lemma}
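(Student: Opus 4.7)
The plan is to prove the statement by induction on \(k - d\). The base case \(k = d\) is immediate: the sum on the right has a single term \(x_{M}\), the binomial coefficient is \(\binom{n/2-d}{0} = 1\), and both sides agree without needing any derivation.

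For the inductive step, suppose the claim holds for some \(k\) with \(d \leq k < n/2\). By the inductive hypothesis we already have
\[
x_M \same{(\mathcal{P}_n, k)} \frac{1}{\binom{n/2 - d}{k - d}} \sum_{\substack{M' \supset M \\ \size{M'} = k}} x_{M'}.
\]
The key step is to upgrade each \(x_{M'}\) on the right side to an average over matchings of size \(k+1\) containing \(M'\). I would do this by applying \prettyref{lem:matching+a} to \(M'\) with each uncovered vertex \(a\) (there are \(n - 2k\) such vertices) and summing. Since every edge \(\{a,u\}\) extending \(M'\) is counted exactly twice in this sum (once when the chosen vertex is \(a\), once when it is \(u\)), we obtain
\[
(n - 2k)\, x_{M'} \same{(\mathcal{P}_n, k+1)} 2 \sum_{\substack{M'' \supset M' \\ \size{M''} = k+1}} x_{M''}.
\]

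Substituting this into the inductive hypothesis and collecting terms, the coefficient of each \(x_{M''}\) with \(M'' \supset M\) and \(\size{M''} = k+1\) is determined by counting the number of intermediate \(M'\) with \(M \subset M' \subset M''\) and \(\size{M'} = k\): there are exactly \(k+1-d\) such \(M'\), one for each edge of \(M'' \setminus M\) that we can delete. Thus
\[
x_M \same{(\mathcal{P}_n, k+1)} \frac{2(k+1-d)}{(n-2k)\binom{n/2-d}{k-d}} \sum_{\substack{M'' \supset M \\ \size{M''} = k+1}} x_{M''}.
\]
It remains to verify the binomial identity
\[
\frac{2(k+1-d)}{(n-2k)\binom{n/2-d}{k-d}} = \frac{1}{\binom{n/2-d}{k+1-d}},
\]
which follows from the Pascal-style recursion \(\binom{n/2-d}{k+1-d} = \binom{n/2-d}{k-d} \cdot \frac{n/2 - k}{k+1-d}\).

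The argument is essentially a clean double-counting, so there is no genuine obstacle; the only thing to watch is the degree bookkeeping. Combining the degree-\(k\) congruence from induction with the degree-\(k+1\) congruence from \prettyref{lem:matching+a} yields a congruence of degree \(\max(k, k+1) = k+1\), matching the statement. The symmetric averaging over uncovered vertices is what makes the coefficients line up cleanly and avoids any arbitrary choice of a distinguished vertex.
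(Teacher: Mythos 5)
Your proof is correct and follows essentially the same strategy as the paper's: induction on \(k-d\) with base case \(k=d\), using \prettyref{lem:matching+a} to grow matchings by one edge, and averaging over uncovered vertices so that the double-counting resolves into the right binomial coefficient. The only cosmetic difference is the order of operations in the inductive step: the paper first applies \prettyref{lem:matching+a} to \(M\) (reducing to the inductive hypothesis for the size-\((d+1)\) extensions \(M_1\), then averaging over the choice of new vertex \(a\), with each \(M'\) counted \(2(k-d)\) times), whereas you apply the inductive hypothesis to \(M\) first and then extend each size-\(k\) matching \(M'\) by one more edge, averaging over the \(n-2k\) vertices uncovered by \(M'\) with each \(M''\) counted twice and each then collected over \(k+1-d\) intermediate \(M'\). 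Both routes are clean, the degree bookkeeping and binomial arithmetic check out in each, and neither requires any idea the other lacks.
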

\begin{proof}
We use induction on \(k - d\).
The start of the induction is with \(k = d\),
when the sides of \prettyref{eq:partial-matching}
are actually equal. If \(k > d\), let \(a\) be a fixed vertex not covered by \(M\).
Applying \prettyref{lem:matching+a} to \(M\) and \(a\)
followed by the inductive hypothesis gives:
\begin{equation*}
    x_{M}
    \same{(\mathcal{P}_{n}, d+1)}
    \sum_{\substack{M_{1} = M \cup \{a,u\} \\
        u \in K_{n} \setminus (M \cup \{a\})}}
    x_{M_{1}}
    \same{(\mathcal{P}_n,k)}
    \frac{1}{\binom{n/2 - d - 1}{k - d - 1}}
    \sum_{\substack{
        M' \supset M_{1} \\ \size{M'} = k \\
        M_{1} = M \cup \{a,u\} \\
        u \in K_{n} \setminus (M \cup \{a\})}}
    x_{M'}
    .
\end{equation*}
Averaging over all vertices \(a\) not covered by \(M\),
we obtain:
\begin{equation*}
  x_{M}
  \same{(\mathcal{P}_n, k)}
  \frac{1}{n - 2 d}
  \frac{1}{\binom{n/2 - d - 1}{k - d - 1}}
  \sum_{\substack{
      M' \supset M_{1} \\ \size{M'} = k \\
      M_{1} = M \cup \{a,u\} \\
      a, u \in K_{n} \setminus M}}
  x_{M'}
  =
  \frac{1}{n - 2 d}
  \frac{1}{\binom{n/2 - d - 1}{k - d - 1}}
  2 (k - d)
  \sum_{\substack{
      M' \supset M \\ \size{M'} = k}}
  x_{M'}
  =
  \frac{1}{\binom{n/2 - d}{k - d}}
  \sum_{\substack{M' \supset M \\ \size{M'} = k}}
  x_{M'}
  .
\end{equation*}
where in the second step the factor \(2(k - d)\) accounts for the number of ways to choose \(a\) and \(u\).
\end{proof}

We are now ready to state and prove the claim about symmetrized polynomials:
\begin{lemma}
  \label{lem:constant}
  For any polynomial \(F\),
  there is a constant \(c_{F}\) with
  \(\sum_{\sigma \in S_{n}} \sigma F \same{(\mathcal{P}_n,\deg F)} c_{F}\).

\begin{proof}
Given \prettyref{lem:monomials},
it suffices to prove the claim for
\(F = x_{M}\) for some partial matching \(M\).
Note that if \(\size{M} = k\) then (using the orbit-stabilizer theorem) the size of the stabilizer of \(M\) is \(2^k k! (n-2k)!\). Now apply \prettyref{lem:partial-matching} with \(d = 0\):
\begin{equation*}
  \sum_{\sigma \in S_{n}} \sigma x_{M}
  = 2^{k} k! (n-2k)! \sum_{M' \colon \size{M'} = k} x_{M'}
  \same{k}
  2^{k} k! (n-2k)! \binom{n/2}{k}.
\end{equation*}
\end{proof}
\end{lemma}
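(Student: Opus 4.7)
The plan is to reduce the general case to the case of monomials, and then to exploit the $S_n$-invariance of the sum to reduce each such sum to a quantity controlled by \prettyref{lem:partial-matching}.

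First, by \prettyref{lem:monomials} we may replace $F$ by a polynomial $F'$ of degree at most $\deg F$ in which every monomial has the form $x_M$ for a partial matching $M$, and this replacement preserves the congruence (the degree of derivation does not increase). Since summation over $\sigma \in S_n$ and the congruence $\same{(\mathcal{P}_n, \deg F)}$ are both linear, it suffices to prove the claim for a single monomial $F = x_M$ with $M$ a partial matching of size $k = \deg F$.

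Next, fix such an $M$. The stabilizer of $M$ under the $S_n$-action has size $2^k k!(n-2k)!$ (swap within each of the $k$ edges, permute the $k$ edges among themselves, permute the $n-2k$ uncovered vertices arbitrarily). By the orbit-stabilizer theorem, the $S_n$-orbit of $M$ consists of all partial matchings of size $k$, each appearing the same number of times in the sum $\sum_{\sigma \in S_n} \sigma x_M$, so
\begin{equation*}
\sum_{\sigma \in S_n} \sigma x_M \;=\; 2^k k!(n-2k)! \sum_{\substack{M' \colon \size{M'} = k}} x_{M'}\mper
\end{equation*}

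Now I apply \prettyref{lem:partial-matching} with the empty partial matching $M = \emptyset$ (so $d = 0$ and $x_\emptyset = 1$): this yields $1 \same{(\mathcal{P}_n, k)} \binom{n/2}{k}^{-1} \sum_{\size{M'} = k} x_{M'}$, equivalently
\begin{equation*}
\sum_{\substack{M' \colon \size{M'} = k}} x_{M'} \;\same{(\mathcal{P}_n, k)}\; \binom{n/2}{k}\mper
\end{equation*}
Combining this with the previous display gives the desired congruence with $c_F \coloneqq 2^k k! (n-2k)! \binom{n/2}{k}$, all in degree at most $k = \deg F$. The only subtle point is to verify that the degree bound matches: the reduction step via \prettyref{lem:monomials} costs degree $\deg F$, and the application of \prettyref{lem:partial-matching} at $d=0$ also costs degree $k = \deg F$, so the overall derivation stays within degree $\deg F$, as required.
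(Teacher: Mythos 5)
Your proof is correct and follows essentially the same route as the paper's: reduce to the monomial case $F = x_M$ via \prettyref{lem:monomials}, compute the stabilizer size $2^k k!(n-2k)!$ by orbit-stabilizer, and then apply \prettyref{lem:partial-matching} at $d=0$ to turn $\sum_{\size{M'}=k} x_{M'}$ into the constant $\binom{n/2}{k}$. You spell out the linearity of the $S_n$-averaging and the congruence a bit more explicitly, which the paper leaves implicit, but the decomposition and key lemmas are the same.
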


\subsection{Low-degree certificates for matching ideal membership}
\label{sec:derivation}

In this section we present a crucial part of our argument, namely that every degree-$d$ polynomial that is identically zero
over perfect matchings has a derivation of this fact whose degree is $O(d)$.

The following lemma will allow us to apply induction:
\begin{lemma}
  \label{lem:degree-increase}
  If \(L\) is a polynomial with
  \(L \same{(\mathcal{P}_{n-2}, d)} 0\) for some \(d\),
  and \(a, b\) are the two additional vertices in \(K_n\),
  then \(L x_{ab} \same{(\mathcal{P}_{n}, d + 1)} 0\).
\end{lemma}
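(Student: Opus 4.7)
The plan is to start with any $\mathcal{P}_{n-2}$-derivation witnessing $L \same{(\mathcal{P}_{n-2}, d)} 0$, namely $L + \sum_{p \in \mathcal{P}_{n-2}} q(p) \cdot p = 0$ with $\max_p \deg(q(p) \cdot p) \le d$, multiply it through by $x_{ab}$ to obtain $L x_{ab} + \sum_{p} q(p) \cdot (x_{ab} \cdot p) = 0$, and then re-express each term $x_{ab} \cdot p$ using only generators of $\mathcal{P}_n$ while keeping every piece of the resulting combination of degree at most $d+1$.

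The generators of $\mathcal{P}_{n-2}$ of the form $x_{uv} x_{uw}$ (with $u, v, w \in [n-2]$ distinct) and $x_{uv}^2 - x_{uv}$ are already elements of $\mathcal{P}_n$, so for such $p$ the term $q(p) \cdot x_{ab} \cdot p$ is automatically a valid $\mathcal{P}_n$-derivation of degree $\deg(q(p) \cdot p) + 1 \le d + 1$. The only generators of $\mathcal{P}_{n-2}$ not present in $\mathcal{P}_n$ are the degree-$1$ matching constraints $p_v \coloneqq \sum_{u \in [n-2],\, u \neq v} x_{uv} - 1$ for $v \in [n-2]$, which differ from their $\mathcal{P}_n$ analogues $p_v' \coloneqq \sum_{u \in [n],\, u \neq v} x_{uv} - 1$ by exactly the two missing edges at $v$, so $p_v = p_v' - x_{av} - x_{bv}$.

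The key identity is then
\begin{equation*}
x_{ab} \cdot p_v = x_{ab} \cdot p_v' - x_{ab} x_{av} - x_{ab} x_{bv},
\end{equation*}
in which every summand on the right lies in $\ispan{\mathcal{P}_n}$ at degree $2$: the first is $x_{ab}$ multiplied by the matching-constraint generator $p_v' \in \mathcal{P}_n$, while $x_{ab} x_{av}$ and $x_{ab} x_{bv}$ are themselves incident-edge generators of $\mathcal{P}_n$ with pivot $a$ and $b$ respectively (distinctness is ensured because $v \in [n-2]$ while $a, b \notin [n-2]$). Because $\deg p_v = 1$ forces $\deg q(p_v) \le d-1$, the product $q(p_v) \cdot x_{ab} \cdot p_v$ rewrites as a $\mathcal{P}_n$-derivation of degree at most $(d-1) + 2 = d+1$. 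Combining both cases yields the required certificate $L x_{ab} \same{(\mathcal{P}_n, d+1)} 0$.

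I do not anticipate any serious obstacle; the only step that requires any thought is the rewriting of $x_{ab} \cdot p_v$ above, and the lemma is evidently framed with the leading factor $x_{ab}$ precisely so that the discrepancy between the matching constraints of $\mathcal{P}_{n-2}$ and $\mathcal{P}_n$ gets absorbed into degree-$2$ products of incident edges, at the cost of a single extra degree. Any attempt to lift a $\mathcal{P}_{n-2}$-derivation to a $\mathcal{P}_n$-derivation \emph{without} such a prefactor would face exactly the problem that the half-edges $x_{av}, x_{bv}$ are genuine new variables rather than elements of the ideal, so the multiplicative form of the statement is essential.
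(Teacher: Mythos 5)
Your proof is correct and takes essentially the same approach as the paper: reduce to the generators, note the quadratic generators of $\mathcal{P}_{n-2}$ are already in $\mathcal{P}_n$, and rewrite $x_{ab} \cdot p_v$ via the identity $x_{ab}\,p_v = x_{ab}\,p_v' - x_{ab}x_{av} - x_{ab}x_{bv}$ with the same degree bookkeeping.
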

\begin{proof}
It is enough to prove the claim for \(L \in \mathcal{P}_{n-2}\).
For \(L = x_{e}^{2} - x_{e}\) and \(L = x_{uv} x_{uw}\)
the claim is trivial since \(L \in \mathcal{P}_{n}\) also.
The remaining case is \(L = \sum_{u \in K_{n-2}} x_{uv} - 1\)
for some \(v \in K_{n-2}\).
Then
\[L x_{ab} = 
\left( \sum_{u \in K_{n-2}} x_{uv} - 1 \right) x_{ab} =
\left( \sum_{u \in K_{n}} x_{uv} - 1 \right) x_{ab}
- x_{av} x_{ab} - x_{bv} x_{ab} \same{d+1} 0.
\]
The degree of the derivation is at most \(d+1\) since we can simply multiply the degree-\(d\) derivation for \(L \same{} 0\) by \(x_{ab}\).
\end{proof}

We now show that any \(F \in \ispan{\cP_{n}}\) can be generated
by low-degree coefficients from \(\cP_{n}\):
\begin{theorem} \label{thm:derivation}
  \label{thm:main2}
For every polynomial $F \in \R[\{x_{uv}\}_{\{u,v\} \in
  \binom{n}{2}}]$, if $F \in \ispan{\cP_{n}}$
  then $F \same{(\cP_n, 2 \deg F - 1)} 0$.

\begin{proof}
We use induction on the degree \(d\) of \(F\).
If \(d = 0\) then \(F = 0\) and the statement holds trivially.
(Note that \(\same{-1}\) is just equality.)
The case \(d = 1\) rephrased means that
the affine space spanned by the characteristic vectors of
all perfect matchings
is defined by the \(\sum_{v} x_{uv} - 1\) for all vertices \(u\).
This follows from Edmonds's description
of the perfect matching polytope by linear inequalities
in \cite{Edmonds65}.

For the case \(d \geq 2\) we first prove the following claim:
\begin{claim*}
If \(F \in \ispan{\cP_n}\) is a degree-\(d\) polynomial and \(\sigma \in S_n\) is a permutation of vertices, then
\[F \same{(\mathcal{P}_n,2d-1)} \sigma F.\]
\end{claim*}
We use induction on the degree. If \(d = 0\) or \(d = 1\) the claim follows from the corresponding cases \(d = 0\) and \(d = 1\) of the theorem.
For \(d \ge 2\) it is enough to prove the claim
when \(\sigma\) is a transposition of two vertices \(a\) and \(u\).
Note that in \(F - \sigma  F\)
all monomials which are independent of both \(a\) and \(u\) cancel:
\begin{equation}
\label{eq:transposition-diff-sum}
F - \sigma F = \sum_{e \colon a \in e \text{ or } u \in e} L_{e} x_{e}
\end{equation}
where each \(L_{e}\) has degree at most \(d - 1\).
We now show that every summand is congruent to a sum of monomials
containing edges incident to both \(a\) and \(u\). For example, for \(e=\{a,b\}\) in \prettyref{eq:transposition-diff-sum} we apply the generator \(\sum_v x_{uv} - 1\) to find:
\begin{equation*}
 L_{ab} x_{ab}
 \same{d+1} L_{ab} x_{ab} \sum_{v} x_{uv}
 \same{d+1} \sum_v L_{ab}x_{ab}x_{uv}.
\end{equation*}
Therefore
\begin{equation*}
  F - \sigma F \same{d+1} \sum_{bv} L'_{bv} x_{ab} x_{uv}
\end{equation*}
for some polynomials \(L'_{bv}\) of degree at most \(d-1\).
We may assume that \(L'_{bv}\) does not contain variables \(x_{e}\)
with \(e\) incident to \(a, b, u, v\),
as these can be removed
using generators like \(x_{ab} x_{ac}\) or \(x_{ab}^{2} - x_{ab}\).
Moreover, it can be checked that \(L'_{bv}\) is zero on all perfect matchings containing
\(\{a, b\}\) and \(\{u,v\}\).
By induction, \(L'_{bv} \same{(\cP_{n-4}, 2d-3)} 0\)
(identifying \(K_{n-4}\)
with the graph \(K_{n} \setminus \{a, b, u, v\}\)),
from which \(L'_{bv} \same{(\cP_{n}, 2d - 1)} 0\) follows
by two applications of \prettyref{lem:degree-increase}. (The special case
\(a = v, b = u\) is also handled by induction and one application of \prettyref{lem:degree-increase}.) This concludes the proof of the claim.

We now apply the claim followed by \prettyref{lem:constant}:
\begin{equation*}
  F \same{2d-1} \frac{1}{n!} \sum_{\sigma \in S_{n}} \sigma F
  \same{d} \frac{c_{F}}{n!}
\end{equation*}
for a constant \(c_{F}\).
As $F \in \ispan{\cP_n}$, it must be that \(c_{F} = 0\), and therefore
\(F \same{2d - 1} 0\).
\end{proof}
\end{theorem}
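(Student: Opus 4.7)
The plan is to induct on $d = \deg F$. The case $d = 0$ forces $F = 0$ and is immediate, and the case $d = 1$ is precisely the statement that every affine relation holding on the characteristic vectors of all perfect matchings of $K_n$ is a combination of the degree constraints $\sum_{v} x_{uv} - 1$, which follows directly from Edmonds's linear description of the matching polytope. So assume $d \geq 2$.

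The main device for the inductive step is an auxiliary symmetry claim: for every $\sigma \in S_n$, $F \same{(\cP_n, 2d-1)} \sigma F$. Once this is in hand, averaging over $\sigma$ and invoking \prettyref{lem:constant} gives $F \same{2d-1} c/n!$ for some constant $c$, and since $F \in \ispan{\cP_n}$ vanishes on every perfect matching, $c$ must be zero. I would prove the symmetry claim by its own induction on degree, reducing to the case of a transposition $\sigma = (a\ u)$, with the degrees $d=0,1$ handled by the corresponding base cases of the theorem. In the difference $F - \sigma F$ all monomials disjoint from both $a$ and $u$ cancel, so it has the form $\sum_e L_e x_e$ with $\deg L_e \leq d-1$ and each edge $e$ incident to $a$ or $u$. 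Multiplying each such $L_e x_e$ by an appropriate degree generator $\sum_w x_{uw}$ or $\sum_w x_{aw}$ and cleaning up via the remaining generators in $\cP_n$, one rewrites $F - \sigma F$, in degree $d+1$, as $\sum_{b, v} L'_{bv}\, x_{ab} x_{uv}$, where each $L'_{bv}$ has degree at most $d-1$ and uses no variables incident to $\{a,b,u,v\}$.

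The crux of the argument is then to observe that each $L'_{bv}$ vanishes on every perfect matching of $K_n$ that contains both $\{a,b\}$ and $\{u,v\}$, since both $F$ and $\sigma F$ separately vanish on perfect matchings; consequently $L'_{bv}$, viewed as a polynomial over the induced graph $K_n \setminus \{a,b,u,v\} \cong K_{n-4}$, lies in $\ispan{\cP_{n-4}}$. The outer induction hypothesis applied to $L'_{bv}$ then yields $L'_{bv} \same{(\cP_{n-4},\, 2d-3)} 0$, and two applications of \prettyref{lem:degree-increase}, one for the edge $\{a,b\}$ and one for $\{u,v\}$, lift this to $L'_{bv}\, x_{ab} x_{uv} \same{(\cP_n,\, 2d-1)} 0$, which closes the symmetry claim and hence the theorem. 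The main obstacle, and the reason the bound is exactly $2d-1$ rather than anything smaller, is this degree accounting: a degree-$(d-1)$ certificate on the smaller graph costs $2(d-1)-1 = 2d-3$, and absorbing the two extra edge variables $x_{ab}$ and $x_{uv}$ costs precisely $2$ more, so the budget is saturated.
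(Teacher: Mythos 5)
Your proposal is correct and follows essentially the same route as the paper's proof: the same two base cases via Edmonds, the same auxiliary symmetry claim $F \same{(\cP_n,2d-1)}\sigma F$ reduced to transpositions, the same rewriting of $F-\sigma F$ as $\sum_{b,v}L'_{bv}x_{ab}x_{uv}$, the same appeal to the degree-$(d-1)$ induction hypothesis over $K_{n-4}$ followed by two applications of \prettyref{lem:degree-increase}, and the same closing step via averaging and \prettyref{lem:constant}. You are in fact slightly more careful than the paper's text in one spot, explicitly noting that the two applications of \prettyref{lem:degree-increase} produce $L'_{bv}x_{ab}x_{uv}\same{(\cP_n,2d-1)}0$ (which is the statement actually used), and your degree-budget accounting at the end correctly explains why the bound is tight at $2d-1$.
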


\subsection{The main theorem}

We now have all the ingredients to prove our main theorem.
Note that the alternating group \(A_{n}\) acts naturally
on \(\perfectmatching{n}\) via permutation of vertices.
Recall that we set \(\tilde{S}(f) \coloneqq \max f\)
and \(\tilde{C}(f) \coloneqq \max f + \varepsilon / 2\), where the functions \(f\) are indexed by edge set and \(\varepsilon\) is a parameter. It follows that the guarantees \(\tilde{C}, \tilde{S}\) are
\(A_{n}\)-symmetric in the sense defined in \prettyref{sec:sym-sdp}.
Our main theorem is an exponential lower bound on the size
of any \(A_{n}\)-coordinate-symmetric SDP extension of
\(\perfectmatching{n}\).
\begin{theorem}[Main]
  \label{thm:matching}
  There exists an absolute constant $\alpha > 0$
  such that for all even $n$ and every \(0 \leq \varepsilon < 1\),
  every \(A_{n}\)-coordinate-symmetric SDP extended formulation
  approximating the perfect matching problem $\perfectmatching{n}$
  within a factor of \(1 - \varepsilon / (n - 1)\)
  has size at least \(2^{\alpha n}\).
  
\begin{proof}
Fix an even integer $n \geq 10$ and
let \(k = \lceil \beta n \rceil\)
for some small enough constant \(0 < \beta < 1/2\) chosen later.
Suppose for a contradiction
that $\perfectmatching{n}$ admits a symmetric SDP extended
formulation of size \(d < \sqrt{\binom{n}{k}} - 1\).

Let \(m\) equal \(n/2\) or \(n/2 - 1\), whichever is odd. 
Let \(S = [m]\) and let \(T = \{m + 1,\ldots, 2m\}\). If \(m = n/2\) then let \(U = \{2m+1, 2m+2\}\), otherwise let \(U = \varnothing\).
Note that \(S \cup T \cup U = [n]\) and \(\size{S} = \size{T} = m = \Theta(n)\).
Consider the objective function for the set of edges \(E[S]\), namely \(f_{E[S]}(M) \coloneqq \card{M \cap E[S]}\). Since \(\size{S}\) is odd we have \(\max f_{E[S]} = (\size{S} - 1) / 2\), from which we obtain:
\begin{equation}
  \label{eq:matching-failed}
 f(x) \defeq \tilde{C}(f_{E[S]}) - f_{E[S]}(x)
= \frac{\size{S} - 1}{2} + \frac{\varepsilon}{2}
- \sum_{u, v \in S} x_{uv} \feq
 \frac{1}{2}\sum_{u \in S, v \in T \cup U} x_{uv} - \frac{1 - \varepsilon}{2}.
\end{equation}
By \prettyref{lem:factorSymSDP},
as \(\binom{d + 1}{2} < \binom{n}{k}\),
there is a constant \(\mu_{f} \geq 0\) and
an \(A_{n}\)-symmetric set \(\cH\) of functions
of size at most \(\binom{n}{k}\)
on the set of perfect matchings
with
\[ f \feq \sum_{g} g^2 + \mu_{f} \qquad \text{with each }
g \in \vspan{\cH}.\]
By \prettyref{prop:matching-junta},
every \(h \in \cH\) depends only on the edges within a vertex set of size less than \(k\), and hence can be represented by a polynomial of degree less than \(k/2\) over perfect matchings.
As the \(g\) are linear combinations of the \(h \in \cH\),
they can also be represented by polynomials of degree less than
\(k/2\), which we assume for the rest of the proof.

Applying \prettyref{thm:derivation} with \eqref{eq:matching-failed},
we conclude
\begin{equation*}
  \frac{1}{2}\sum_{u \in S, v \in T \cup U} x_{uv} - \frac{1 - \varepsilon}{2}
  \same{(\cP_{n}, 2 k - 1)} \sum_{g} g^{2} + \mu_f.
\end{equation*}
We now apply the following substitution: set \(x_{2m+1,2m+2} \seteq 1\) if \(U\) is not empty, set \(x_{u+m,v+m} \seteq x_{uv}\)
for each \(uv \in E[S]\), and set \(x_{uv} \seteq 0\) otherwise. Intuitively, the substitution ensures that \(U\) is matched, ensures the matching on \(T\) is identical to the matching on \(S\), and ensures every edge is entirely within \(S\), \(T\), or \(U\). The main point is that the substitution maps every polynomial in \(\cP_{n}\) either to \(0\) or into \(\cP_{m}\).

Applying this substitution we obtain a new polynomial identity on the
variables $\{x_{uv}\}_{\{u,v\} \in \binom{S}{2}}$:
\begin{equation}
\label{eq:sos-refutation-pm}
-\frac{1 - \varepsilon}{2} \same{(\cP_{m}, 2 k - 1)} \sum_{g} g^{2} + \mu_f.
\end{equation}

This equation is a sum of squares proof that an odd clique of size $m$ cannot have a perfect matching. To complete our argument we appeal to a theorem from \cite{grigoriev2001linear} which shows that any such proof must have high degree. Since the degree of the proof in \prettyref{eq:sos-refutation-pm} is \(2k - 1\), our conclusion will be that \(k\) must be large.

The theorem from \cite{grigoriev2001linear} uses different terminology from what we have developed here. It is phrased in terms of Positivstellensatz Calculus (\(PC_>\)) proof systems and the \(\text{MOD}_2\) principle. 
We first present the theorem as originally stated and then relate it to our setting.

\begin{theorem}[{\cite[Corollary~2]{grigoriev2001linear}}]
\label{thm:grigoriev}
The degree of any \(PC_>\) refutation of \(\text{MOD}_2^k\) is greater than \(\Omega(k)\).
\end{theorem}

The \(\text{MOD}_p^k\) principle states that it is not possible to partition a set of size \(k\) into groups of size \(p\) if \(k\) is congruent to \(1\) modulo \(p\). In our case, with \(p = 2\) and \(k\) odd, this is equivalent to the statement that no perfect matching exists in an odd clique.

Likewise, via \cite[Definition~2]{grigoriev2001linear} one checks that
\prettyref{eq:sos-refutation-pm} constitutes a \(PC_>\) proof;
we refer the reader to \cite{buss1999linear} for further discussion.

Applying \prettyref{thm:grigoriev}
to \prettyref{eq:sos-refutation-pm},
we find that \(2 k - 1 = \Omega(m) = \Omega(n)\),
a contradiction when \(\beta\) is chosen small enough.
Since \(\tilde{S}(f) = \max f \leq (n - 1)/2\) when \(f\) is associated with an odd set, we have \((1 - \varepsilon / (n - 1)) \tilde{C}(f) \geq \tilde{S}(f)\), which establishes an inapproximability ratio of \(1 - \varepsilon / (n - 1)\).
\end{proof}
\end{theorem}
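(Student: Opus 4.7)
The plan is to argue by contradiction: assume a small symmetric SDP formulation exists and derive a low-degree sum-of-squares refutation that an odd clique has a perfect matching, which is forbidden by Grigoriev's theorem. The skeleton follows the standard Chan--Lee--Raghavendra--Steurer / Lee--Raghavendra--Steurer template but needs to be adapted to the matching problem, where the solution space has nontrivial polynomial structure.

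First I would fix an even $n$, set $k = \lceil \beta n\rceil$ for a small constant $\beta$ to be chosen, and assume for contradiction that there is an $A_n$-coordinate-symmetric SDP formulation of size $d < \sqrt{\binom{n}{k}} - 1$. I would then pick an objective function that will yield a useful refutation. A natural choice is to take an odd-size subset $S \subseteq [n]$ of size $\Theta(n)$, so that no perfect matching restricted to $S$ can be perfect; the objective function $f_{E[S]}$ counts edges of $M$ inside $S$, whose maximum value is $(|S|-1)/2$. Then the slack $\tilde{C}(f_{E[S]}) - f_{E[S]}$ reduces (modulo the matching ideal) to a linear expression measuring the $\varepsilon$-gap between the approximate value and the true max, together with an edge-count across the cut $(S,\bar S)$.

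Next, I would apply \prettyref{lem:factorSymSDP} to get a sum-of-squares representation of this slack over an $A_n$-symmetric set $\cH$ of at most $\binom{d+1}{2} < \binom{n}{k}$ basis functions on perfect matchings. By \prettyref{prop:matching-junta}, each $h \in \cH$ depends only on edges within a vertex set of size less than $k$, so each $h$ (and each square term) is representable by a polynomial of degree less than $k/2$ on $\perfectmatching{n}$. I would then invoke \prettyref{thm:derivation} to promote this equality-modulo-the-matching-ideal into a genuine polynomial congruence of degree $2k-1$ over $\cP_n$.

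The final step is to collapse the identity to an odd clique. Here I would apply a substitution that pairs $S$ with a copy of itself on the complementary vertices and (if $n/2$ is even) matches an extra pair, sending $\cP_n$ into $\cP_m$ for an odd $m = \Theta(n)$. Under this substitution the cut-edge contribution vanishes and we obtain a degree-$(2k-1)$ SOS refutation of the form $-\tfrac{1-\varepsilon}{2} \equiv \sum g^2 + \mu_f$ over $\cP_m$, i.e., a $PC_>$ refutation of $\mathrm{MOD}_2^m$. Grigoriev's lower bound (\prettyref{thm:grigoriev}) forces $2k-1 = \Omega(m) = \Omega(n)$, contradicting $k = \lceil \beta n\rceil$ for sufficiently small $\beta$. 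The approximation factor $1 - \varepsilon/(n-1)$ follows since $\tilde{S}(f) = \max f \le (n-1)/2$.

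The main obstacle is verifying that the substitution really does map $\cP_n$ into $\cP_m$ (with some generators going to zero) and that the resulting identity on $\cP_m$ is a legitimate $PC_>$ refutation in the sense of Grigoriev; the other ingredients are assembled from results already proved in the paper. The junta-plus-derivation combination is what makes the matching space tractable here, even though membership in the matching ideal is not trivially testable as in the Boolean cube case.
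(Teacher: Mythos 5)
Your proposal is correct and follows essentially the same route as the paper: choose an odd-size set $S$, use \prettyref{lem:factorSymSDP} and \prettyref{prop:matching-junta} to get a low-degree SOS representation, promote it to a degree-$(2k-1)$ congruence via \prettyref{thm:derivation}, then substitute to an odd clique $K_m$ and invoke Grigoriev's $\text{MOD}_2$ lower bound. The one detail to make precise is the substitution (the paper sets $x_{2m+1,2m+2}\coloneqq 1$ if $U\ne\varnothing$, $x_{u+m,v+m}\coloneqq x_{uv}$ for $uv\in E[S]$, and $x_{uv}\coloneqq 0$ otherwise) and a direct check that each generator of $\cP_n$ maps to $0$ or into $\cP_m$ — but your outline correctly identifies this as the point requiring verification.
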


\section{The Metric Traveling Salesperson Problem (TSP) revisited}

In this section, we prove that a particular Lasserre SDP is optimal
among all symmetric SDP relaxations for the asymmetric metric
traveling salesperson problem on \(K_{n}\).
The \emph{feasible solutions} of the problem
are all permutations $\sigma\in S_{n}$.
A permutation $\sigma$ corresponds to the tour in $K_n$
in which vertex $i$ is the $\sigma(i)$-th
vertex visited.
An \emph{instance} $\mathcal{I}$ of TSP is a set of non-negative
distances
$d_\mathcal{I}(i,j)$ for each edge $(i,j)\in K_n$, obeying
the triangle inequality. The value of a tour $\sigma$ is just the sum
of the distances of edges traversed
$\val_{\mathcal{I}}(\sigma) = \sum_i d_\cI(\sigma^{-1}(i),\sigma^{-1}(i + 1))$.
The \emph{objective functions} are all the \(\val_{\mathcal{I}}\). Note that TSP is a \emph{minimization} problem rather than a maximization problem, but the framework presented in \prettyref{sec:sym-sdp} generalizes naturally to minimization problems by just flipping the inequalities. For approximation guarantees we will use \(\tilde S(f) = \min f\)
and \(\tilde C(f) = \min f / \rho\) for some factor \(\rho \geq 1\).
Therefore instead of referring to a ``\((\tilde C,\tilde S)\)-approximate formulation''
we will refer to a ``formulation within a factor \(\rho\).''

The natural action of \(A_{n}\) on TSP is by permutation of vertices,
which means here that \(A_{n}\) acts on \(S_{n}\) by composition
from the left:
\((\sigma_{1} \cdot \sigma_{2})(i) = \sigma_{1} (\sigma_{2}(i))\).
Obviously, the problem TSP is \(A_{n}\)-symmetric.

The ring of real-valued functions on the set \(S_{n}\)
of feasible solutions is isomorphic to
\(\R[\{x_{ij}\}_{\{i,j\} \in [n]}] / \ispan{\cQ_{n}}\),
with \(x_{ij}\) being the indicator of \(\sigma(i) = j\),
and \(\cQ_{n}\) is the set of
\emph{TSP constraints}:
\begin{align*}
\mathcal{Q}_n = &\setcomp{\sum_{i \in [n]} x_{ij} - 1}{j \in [n]} \cup \setcomp{\sum_{j \in [n]} x_{ij} - 1}{i \in [n]} \\ 
&\cup \setcomp{x_{ij}x_{ik}}{i,j,k \in [n]} \cup \setcomp{x_{ij}x_{kj}}{i,j,k \in [n]}\\
&\cup \setcomp{x_{ij}^2 - x_{ij}}{i,j \in [n]}.
\end{align*}

We emphasize that our description of the TSP constraints is different from the TSP polytope treated in \cite{Yannakakis91, Yannakakis88} and \cite{extform4, DBLP:journals/jacm/FioriniMPTW15}:
 the variables \(x_{ij}\) do not directly encode the edges of a Hamiltonian cycle but instead specify a permutation of \(n\) vertices, encoded as a perfect bipartite matching on \(K_{n,n}\).

Following the framework presented in \cite{Lee2014power}, we define the Lasserre hierarchy for TSP as follows.
The (dual of) the $k$-th level Lasserre SDP relaxation for a TSP instance $\cI$ is given by
\begin{align*}
\text{Maximize }  & C\\
\text{subject to } & \val_{\cI} - C \same{(\mathcal{Q}_n,k)}
 \sum_{p} p^{2}
 &\text{for some polynomials \(p\)}.
\end{align*}

We now state our main theorem regarding optimal SDP relaxations for
TSP.
\begin{theorem}
\label{thm:tsp-main}
  Suppose that there is some coordinate
  $A_{2n}$-symmetric SDP relaxation of size
  $r < \sqrt{\binom{n}{k}} - 1$
  approximating TSP within some factor \(\rho \geq  1\)
  for instances on $2n$ vertices.
  Then the $(2k-1)$-level Lasserre relaxation approximates
  TSP within the factor of \(\rho\) on instances on $n$ vertices.
\end{theorem}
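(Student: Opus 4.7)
The plan is to follow the four-step template used in \prettyref{thm:matching}. Given an $n$-vertex TSP instance $\cI$, I would first lift it to a $2n$-vertex instance $\cI'$ by placing a zero-cost twin next to each vertex, so that every tour of $\cI'$ projects to a tour of $\cI$ of no greater cost (by contracting consecutive twin visits and using the triangle inequality) while the optimal values coincide. Applying the assumed $A_{2n}$-coordinate-symmetric SDP to $\cI'$ and invoking \prettyref{lem:factorSymSDP} produces an $A_{2n}$-symmetric set $\cH$ of at most $\binom{r+1}{2} < \binom{n}{k} \le \binom{2n}{k}$ functions on $S_{2n}$ satisfying
\[
  \tfrac{1}{\rho}\val_{\cI'}(\sigma) - \val_{\cI'}(\sigma)
  = \sum_{j} h_{j}(\sigma)^{2} + \mu_{\cI'}
\]
for all tours $\sigma$ of $\cI'$, with each $h_{j} \in \vspan{\cH}$.

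Next I would prove a junta lemma for permutations analogous to \prettyref{prop:matching-junta}: any $A_{2n}$-symmetric set of functions on $S_{2n}$ of size less than $\binom{2n}{k}$ depends only on $\sigma|_{W}$ for some vertex set $W \subseteq [2n]$ with $\size{W} < k$. The argument parallels the matching case: orbit-stabilizer combined with \prettyref{lem:DixonMortimer} produces a $W$ stabilized by $A([2n]\setminus W)$, and the analog of \prettyref{lem:evenPerm} is immediate because two permutations agreeing on $W$ differ by an element of $S([2n]\setminus W)$ whose parity can always be flipped by a transposition disjoint from $W$. A function depending only on $\sigma|_{W}$ can be written as a polynomial in the $x_{ij}$ with $i \in W$ of degree at most $\size{W} < k$, so each $h_{j}$ has degree less than $k$ and each $h_{j}^{2}$ less than $2k$.

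The critical step is a low-degree derivation theorem for $\cQ_{n}$ in analogy with \prettyref{thm:derivation}, stating that every $F \in \ispan{\cQ_{n}}$ of degree $d$ satisfies $F \same{(\cQ_{n}, 2d-1)} 0$. I would follow the pattern of \prettyref{sec:sym-poly-constant} and \prettyref{sec:derivation}: a monomial-reduction lemma analogous to \prettyref{lem:monomials} reduces polynomials modulo $\cQ_n$ to linear combinations of $x_M$ for partial permutations $M$; an averaging argument analogous to \prettyref{lem:constant} shows that $\sum_{\sigma \in S_n}\sigma F$ reduces to a constant; and an inductive step proves $F \same{2d-1} \sigma F$ for each transposition $\sigma$, invoking an analog of \prettyref{lem:degree-increase} to lift derivations from smaller instances. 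With this in hand, a natural substitution from $\{x^{(2n)}_{ij}\}$ to $\{x^{(n)}_{ij}\}$---mapping twin-to-twin variables to the corresponding $n$-vertex variable and all others to zero---sends each $\cQ_{2n}$-generator either to zero or to a $\cQ_n$-generator and sends $\val_{\cI'}$ to $\val_{\cI}$ modulo $\ispan{\cQ_n}$. Combined with the derivation theorem, this produces a sum of squares identity for $\tfrac{1}{\rho}\val_{\cI} - \val_{\cI}$ of degree at most $2k-1$ modulo $\ispan{\cQ_n}$, which is exactly a $(2k-1)$-level Lasserre certificate of $\rho$-approximation for $\cI$.

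The main obstacle is the derivation theorem for $\cQ_n$. The TSP ideal is more intricate than the matching ideal because its generators $x_{ij}x_{ik}$ and $x_{ij}x_{kj}$ couple variables along both rows and columns of the $[n]\times[n]$ variable grid. Consequently, the inductive removal of two vertices at a time that worked for matching must be replaced by the simultaneous removal of a row and a column of variables, and verifying that the linear-in-$d$ degree bound survives this two-dimensional recursion is the technical heart of the proof.
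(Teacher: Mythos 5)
Your overall template is right and your two structural ingredients (the doubling construction and a $\cQ_n$ analogue of Theorem~\ref{thm:derivation}) are exactly what the paper uses, but the junta step has a genuine gap. You claim the analogue of Lemma~\ref{lem:evenPerm} ``is immediate'' for tours, namely that two permutations agreeing on a small set $W$ lie in the same $A([2n]\setminus W)$-orbit because the parity of the difference permutation ``can always be flipped by a transposition disjoint from $W$.'' That reasoning is false for tours. The parity-flipping trick in Lemma~\ref{lem:evenPerm} works for matchings only because a matching has a nontrivial stabilizer under $S_n$ (transposing the two endpoints of a matching edge fixes the matching), so one can post-compose with such a transposition to change parity without changing the object. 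A tour $\sigma\in S_{2n}$, by contrast, has trivial stabilizer under the left-composition action: if $\sigma_2=\pi\cdot\sigma_1$ with $\pi\in S([2n]\setminus W)$ odd, then $\tau\pi\cdot\sigma_1\neq\sigma_2$ for any transposition $\tau$, so there is no even permutation in $A([2n]\setminus W)$ carrying $\sigma_1$ to $\sigma_2$. Consequently the correct junta statement (the paper's Proposition~\ref{prop:tsp-junta}) asserts that each $h$ depends on the positions of $W$ \emph{and on the sign of $\sigma$}, and the sign dependence cannot be written as a low-degree polynomial. Your proposal would therefore not yield that the $h_j$ have degree less than $k$.

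You do invoke the doubling construction, but you motivate it purely by cost preservation; you miss that its essential role in the paper is precisely to kill this sign dependence. Lemma~\ref{lem:tsp-evenPerm} shows every canonical tour $\Phi(\sigma)$ is an even permutation of $S_{2n}$, so once one restricts the functions $h'\in\cH'$ to the image of $\Phi$ the sign is constant and $h(\sigma):=h'(\Phi(\sigma))$ genuinely depends only on the positions of a set of fewer than $k$ vertices of $[n]$, at which point it can be written as a degree-$<k$ polynomial and Theorem~\ref{thm:tsp-derivation} applies. If you correct your junta lemma to include the sign and then observe that the restriction to $\operatorname{Im}\Phi$ makes the sign constant, your substitution-based argument could be made to line up with the paper's. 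Your remarks on the $\cQ_n$ derivation theorem are on the right track; the paper's Theorem~\ref{thm:tsp-derivation} treats partial tours as partial matchings in $K_{n,n}$ and removes one vertex from each side of the bipartition in Lemma~\ref{lem:tour-degree-increase}, which sidesteps the ``two-dimensional recursion'' concern you raise.
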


To prove \prettyref{thm:tsp-main} there is an equivalent of \prettyref{prop:matching-junta} we need for TSP tours, so that a small set of invariant functions depends only on the positions of a small number of indices. We start with the following proposition.

\begin{proposition}
\label{prop:tsp-junta}
  Let \(\cH\) be an \(A_{n}\)-symmetric set of functions of size
  $\binom{n}{k}$
  on the set of TSP tours $\sigma\in S_{n}$.
  Then for every \(h \in \cH\) there is a set
  \(W \subseteq [n]\) of size less than \(k\),
  such that $h(\sigma)$ depends only on the positions
  of the vertices in $W$ in the tour $\sigma$, and
  the sign of $\sigma$ as a permutation.
\end{proposition}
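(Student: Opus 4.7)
The plan is to closely mirror the proof of \prettyref{prop:matching-junta}. The first step is the same: by orbit-stabilizer, \(\card{A_n : \operatorname{Stab}(h)} = \card{\operatorname{Orb}(h)} \leq \card{\cH} \leq \binom{n}{k}\) (with the strict inequality that is actually needed below inherited from the matching-style hypothesis), and \prettyref{lem:DixonMortimer} then supplies a subset \(W \subseteq [n]\) with \(\card{W} < k\) such that \(A([n] \setminus W) \leq \operatorname{Stab}(h)\). The substantive work that remains is a TSP analog of \prettyref{lem:evenPerm}: two tours that agree on the vertices of \(W\) and have the same sign must lie in a single \(A([n] \setminus W)\)-orbit.

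This analog is actually cleaner than the matching lemma because TSP tours are themselves permutations, so the connecting group element can be written down explicitly. Given \(\sigma_{1}, \sigma_{2} \in S_{n}\) with the same sign that agree on \(W\), I would set \(g \coloneqq \sigma_{2} \sigma_{1}^{-1}\). By construction \(g \cdot \sigma_{1} = \sigma_{2}\) under the left action of \(A_{n}\); agreement of \(\sigma_{1}\) and \(\sigma_{2}\) on \(W\) forces \(g\) to fix the corresponding indices pointwise; and \(\operatorname{sign}(g) = \operatorname{sign}(\sigma_{1}) \operatorname{sign}(\sigma_{2}) = +1\) places \(g\) in \(A([n] \setminus W) \leq \operatorname{Stab}(h)\). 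Hence \(h(\sigma_{1}) = h(\sigma_{2})\), so \(h\) depends only on the positions of the vertices in \(W\) together with \(\operatorname{sign}(\sigma)\).

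The one genuine conceptual departure from the matching proposition is the explicit dependence on \(\operatorname{sign}(\sigma)\). In \prettyref{lem:evenPerm} an odd connecting permutation could always be converted to an even one by composing with a free transposition of two unmatched vertices, because a perfect matching carries no intrinsic parity. For TSP tours the sign is a genuine invariant of the permutation encoding that \(A([n] \setminus W) \subseteq A_{n}\) cannot change, which is precisely why the statement must allow \(h\) to depend on \(\operatorname{sign}(\sigma)\) as a second parameter. I expect this sign-tracking to be the only real obstacle beyond a verbatim transcription of the matching argument; all of the group-theoretic machinery from the matching case transfers directly.
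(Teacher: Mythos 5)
Your proof takes essentially the same approach as the paper: apply \prettyref{lem:DixonMortimer} to $\operatorname{Stab}(h)$ to find $W$ with $A([n]\setminus W)\leq\operatorname{Stab}(h)$, and observe that $h$ is constant on $A([n]\setminus W)$-orbits of tours, which are determined by the junta data together with the sign (the sign dependence being the one genuine departure from \prettyref{prop:matching-junta}, exactly as you note). Your explicit construction of the connecting element $g=\sigma_2\sigma_1^{-1}$ is a useful addition to what the paper glosses as ``clear,'' but watch the action convention: under the paper's stated \emph{left} action $g\cdot\sigma=g\circ\sigma$, agreement $\sigma_1|_W=\sigma_2|_W$ forces $g$ to fix $\sigma_1(W)$ pointwise rather than $W$, and what actually places $g$ in $A([n]\setminus W)$ is $\sigma_1^{-1}|_W=\sigma_2^{-1}|_W$ (the same vertex occupies each position in $W$); the paper's own phrase ``positions of the vertices in $W$'' carries this same ambiguity (it reads as $\sigma|_W$, which fits the right action $g\cdot\sigma=\sigma\circ g^{-1}$ with $g=\sigma_2^{-1}\sigma_1$), so your junta conclusion is correct once the invariant is stated consistently with the chosen action.
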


\ifnum\longform=1
	\begin{proof}
For every \(h\in\cH\) we can apply 
\prettyref{lem:DixonMortimer} to
the stabilizer of \(h\) to 
obtain a subset \(W \subseteq [n]\) of size
at most \(k\)
such that \(h\) is stabilized by \(A([n] \setminus
W)\). Thus for every tour $\sigma$, $h$ is constant on the \(A([n] \setminus W)\)-orbit of $\sigma$. This orbit is clearly determined by the positions of the vertices in $W$ and, since $A([n] \setminus W)$ preserves signs, the sign of the permutation $\sigma$. 
\end{proof}

\else
	The proof appears in \prettyref{sec:tsp-junta}.
\fi

Next we give a reduction which allows us to eliminate the dependence of the functions $h\in\cH$ on the sign of the permutation $\sigma$. In particular we encode every TSP tour $\sigma$ on an $n$-vertex graph as some new tour $\Phi(\sigma)$ in a $2n$-vertex graph, such that $\Phi(\sigma)$ is always an even permutation in $S_{2n}$.

\begin{lemma}
\label{lem:tsp-evenPerm}
Let $\cI$ be an instance of TSP on $K_n$. Then there exists an instance $\cI'$ of TSP on $K_{2n}$ and an injective map $\Phi:S_{n}\to S_{2n}$ such that
\begin{enumerate}
\item $\val_{\cI}(\sigma) = \val_{\cI'}(\Phi(\sigma))$ for all $\sigma\in S_n$.
\item For every tour $\tau\in S_{2n}$ there exists $\sigma\in S_n$ such that $\val_{\cI'}(\Phi(\sigma)) \leq \val_{\cI'}(\tau)$
\item For all $\sigma\in S_n$ the permutation $\Phi(\sigma)$ is even.
\end{enumerate}
\end{lemma}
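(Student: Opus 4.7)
My plan is to construct \(\cI'\) by \emph{doubling} each vertex of \(\cI\). Identify the vertex set of \(K_{2n}\) with \(n\) pairs \(\{2i-1, 2i\}\), where both \(2i-1\) and \(2i\) represent vertex \(i \in [n]\). Set intra-pair distances to zero, \(d_{\cI'}(2i-1, 2i) = d_{\cI'}(2i, 2i-1) = 0\), and inter-pair distances to \(d_{\cI'}(u, v) := d_{\cI}(i, j)\) whenever \(u \in \{2i-1, 2i\}\), \(v \in \{2j-1, 2j\}\) with \(i \ne j\). A short case analysis, using that the only zero-cost edges lie within pairs, shows \(d_{\cI'}\) inherits the triangle inequality from \(d_{\cI}\). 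Define the injection \(\Phi \colon S_n \to S_{2n}\) by \(\Phi(\sigma)(2i-1) = 2\sigma(i) - 1\) and \(\Phi(\sigma)(2i) = 2\sigma(i)\), so the tour \(\Phi(\sigma)\) visits the two copies of \(\sigma^{-1}(k)\) consecutively at positions \(2k-1\) and \(2k\).

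Conditions (1) and (3) follow quickly. For (1), the tour \(\Phi(\sigma)\) consists of \(n\) zero-cost intra-pair edges interleaved with \(n\) inter-pair edges whose total cost is exactly \(\val_{\cI}(\sigma)\). For (3), direct computation shows \(\Phi\) is a group homomorphism, and every transposition \((i\,j) \in S_n\) is sent to the product of disjoint transpositions \((2i-1\,\,2j-1)(2i\,\,2j) \in S_{2n}\), which is even. Since transpositions generate \(S_n\), the entire image of \(\Phi\) lies in \(A_{2n}\).

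The main technical step, and the expected obstacle, is (2). Call \(\tau \in S_{2n}\) \emph{pair-contiguous} if the two vertices of each pair are cyclically adjacent in \(\tau\). Any pair-contiguous tour equals \(\Phi(\sigma)\) for some \(\sigma\): since both intra-pair edges have cost zero, we may freely swap \(2i-1\) and \(2i\) within each pair so that \(2i-1\) precedes \(2i\), and then \(\sigma\) is read off from the order in which the pairs appear along the tour. Thus it suffices to show every tour can be converted to a pair-contiguous one without increasing cost. For a tour \(\tau\) with a split pair \(\{v_a = 2i-1, v_b = 2i\}\) (cyclically \(b > a+1\)), I move \(v_b\) to immediately follow \(v_a\). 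Since inter-pair distances depend only on pair membership and not on the choice of copy, the edge \((v_a, v_{a+1})\) is replaced by \((v_b, v_{a+1})\) at equal cost, a new zero-cost edge \((v_a, v_b)\) appears, and the two edges \((v_{b-1}, v_b), (v_b, v_{b+1})\) are replaced by the single edge \((v_{b-1}, v_{b+1})\). The triangle inequality \(d_{\cI}(j', k') \le d_{\cI}(j', i) + d_{\cI}(i, k')\) (where \(j', k'\) are the pairs of \(v_{b-1}, v_{b+1}\)) gives that the cost does not increase; a few edge cases, such as \(j' = k'\), or the pair wrapping cyclically around the tour, only help. Iterating this move terminates at a pair-contiguous tour, completing the proof.
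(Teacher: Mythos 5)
Your construction is exactly the one in the paper: double each vertex, give intra-pair edges cost zero and inter-pair edges the original cost, define the same map $\Phi$, and prove (2) by showing every tour can be converted to a ``pair-contiguous'' (the paper's ``canonical'') tour without increasing cost. You supply slightly more detail than the paper in two places — explicitly verifying that $d_{\cI'}$ inherits the triangle inequality, and spelling out the local move (and its termination) for property (2), where the paper simply asserts the canonical-tour dominance ``is easy to check'' — and your argument for (3) via the homomorphism property is an equivalent restatement of the paper's explicit decomposition into $2m$ transpositions.
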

\begin{proof}
Given a TSP instance $\mathcal{I}$ on $K_n$ we construct a new instance $\mathcal{I'}$ on $K_{2n}$ as follows:
\begin{itemize}
\item For every vertex $i \in \mathcal{I}$ add a pair of vertices $i$ and $i'$ to $\mathcal{I'}$.
\item For every distance $d(i,j)$ in $\mathcal{I}$ add 4 edges all with the same distance $d(i,j) = d(i',j) = d(i,j') = d(i',j')$ to $\mathcal{I'}$.
\item For every pair of vertices $i,i' \in \mathcal{I'}$ add an edge of distance zero, i.e.\ set $d(i,i') = 0$.
\end{itemize}
We will call a tour $\tau\in S_{2n}$ \textit{canonical} if it visits $i'$ immediately after $i$, i.e.\ $\sigma(i') = \sigma(i) + 1$. We will write $T$ for the set of canonical tours in $S_{2n}$. It is easy to check using the triangle inequality that for every tour $\tau$ there is a canonical tour with no larger value. For every tour $\sigma$ in $\mathcal{I}$ define $\Phi(\sigma)$ to be the corresponding canonical tour in $\mathcal{I'}$. That is $\Phi(\sigma)(i) = 2\sigma(i) - 1$ and $\Phi(\sigma)(i') = 2\sigma(i)$. Note that $\Phi:S_n\to S_{2n}$ is an injective map whose image is all of $T$. By construction we have:
\[
\val_{\mathcal{I}}(\sigma) \equiv \val_{\mathcal{I'}}(\Phi(\sigma))
\]
which proves property (1). Property (2) follows from the fact that every tour $\tau\in S_{2n}$ has a canonical tour with no larger value, and that $T$ is the image of $\Phi$.

For property (3), note that every canonical tour is an even permutation. To see why, suppose $\sigma\in S_n$ is given by $\sigma = (i_1,j_1)(i_2,j_2),\ldots,(i_m,j_m)$ where $(i,j)$ denotes the permutation that swaps $i$ and $j$. Then $\Phi(\sigma) = (i_1,j_1)(i_1',j_1'),\ldots,(i_m,j_m)(i_m',j_m')$ is comprised of $2m$ swap permutations, and is therefore even.
\end{proof}

The last ingredient we need is a version of \prettyref{thm:derivation}
for the TSP.
\begin{theorem}
\label{thm:tsp-derivation}
If $F$ is a multilinear polynomial whose monomials are partial matchings on $K_{n,n}$ and $F \in \ispan{\cQ_n}$, then $F \same{(\mathcal{Q}_n,2\deg F - 1)} 0$. 
\end{theorem}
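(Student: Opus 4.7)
The plan is to mirror the proof of \prettyref{thm:derivation}, replacing each step with its bipartite analog. Here \(x_{ij}\) indicates that row \(i\) is matched to column \(j\) in \(K_{n,n}\), partial matchings correspond to partial permutations, and the natural symmetry group is \(S_n \times S_n\) acting by separate row- and column-permutations. First I would establish analogs of the lemmas in \prettyref{sec:sym-poly-constant}: using \(\sum_j x_{aj} - 1\) together with the row-orthogonality \(x_{aj} x_{ak}\), one derives \(x_M \same{(\cQ_n,\,|M|+1)} \sum_{k} x_{M \cup \{(a,k)\}}\) whenever row \(a\) is uncovered by \(M\) (and symmetrically for columns). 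Iterating this extension and averaging over \(S_n \times S_n\) gives a low-degree derivation that \(\sum_{(\sigma,\tau)} (\sigma,\tau) x_M \same{|M|} c_M\) is a constant. Finally, an analog of \prettyref{lem:degree-increase} shows that if \(L \same{(\cQ_{n-1}, d)} 0\), with \(\cQ_{n-1}\) identified with the TSP constraints on \(K_{n,n}\) after deleting row \(a\) and column \(b\), then \(L \cdot x_{ab} \same{(\cQ_n, d+1)} 0\); this follows by checking that each generator of \(\cQ_{n-1}\), multiplied by \(x_{ab}\), is congruent to a generator of \(\cQ_n\) times \(x_{ab}\) after stripping stray row-\(a\) or column-\(b\) variables using orthogonality.

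With these tools I would induct on \(d = \deg F\). The case \(d = 0\) is trivial; the case \(d = 1\) reduces to the classical fact that the affine hull of the Birkhoff polytope is cut out by the \(2n\) row- and column-sum equalities. For \(d \geq 2\) the key claim is that \(F \same{(\cQ_n, 2d-1)} \sigma F\) for every transposition \(\sigma\) of two rows (or two columns). Fixing \(\sigma\) to swap rows \(a\) and \(a'\), one writes
\[
F - \sigma F = \sum_{e : a \in e \text{ or } a' \in e} L_e \, x_e
\]
with \(\deg L_e \leq d - 1\). Applying the column-sum generator \(\sum_c x_{a'c} - 1\) to each \(L_{ab} x_{ab}\) (and symmetrically for the \(a'\)-monomials) rewrites the right-hand side in degree \(d+1\) as \(\sum_{b,c} L'_{bc} \, x_{ab} x_{a'c}\); cleaning with the orthogonality generators, one may assume \(L'_{bc}\) involves no variables touching rows \(\{a, a'\}\) or columns \(\{b, c\}\). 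Since each \(L'_{bc}\) vanishes on every perfect matching containing \((a,b)\) and \((a',c)\), we have \(L'_{bc} \in \ispan{\cQ_{n-2}}\) on the residual \((n-2) \times (n-2)\) bipartite graph. The inductive hypothesis yields \(L'_{bc} \same{(\cQ_{n-2}, 2d-3)} 0\), and two applications of the degree-increase lemma promote this to \(L'_{bc}\, x_{ab} x_{a'c} \same{(\cQ_n, 2d-1)} 0\). Combining the claim with the symmetrization lemma, \(F \same{2d-1} (n!)^{-2} \sum_{\sigma,\tau} (\sigma,\tau) F \same{d} c_F\) for a constant \(c_F\), which must vanish since \(F \in \ispan{\cQ_n}\).

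The main obstacle is the bookkeeping for the bipartite structure. Unlike the matching case, which treats all vertices of \(K_n\) uniformly, here one must separately invoke row- and column-sum generators during the extension step and both row- and column-orthogonality generators during the cleaning step of the main induction; moreover, the reduction from \(K_{n,n}\) to \(K_{n-2, n-2}\) deletes two rows \emph{and} two columns rather than two vertices as in \prettyref{lem:degree-increase}. Once the generators of \(\cQ_{n-1}\) are carefully identified with their lifts in \(\cQ_n\) modulo orthogonality, however, the verification of each auxiliary lemma is routine, and the overall degree accounting matches the matching case exactly, giving the desired bound \(2 \deg F - 1\).
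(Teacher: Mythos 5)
Your proposal is correct and follows essentially the same outline as the paper's proof: reduce to partial-matching monomials, show the symmetrization over the acting group is congruent to a constant in low degree, and for $d \geq 2$ prove the transposition claim $F \same{(\cQ_n, 2d-1)} \sigma F$ by rewriting $F - \sigma F$ with the row/column-sum generators, cleaning with orthogonality, and inducting on the residual $L'_{bc}$ via a degree-increase lemma. Two cosmetic differences from the paper are worth noting, neither of which affects correctness. First, the paper lets only $S_n$ act, and only on the row labels (vertices of $U_n$), since that is the natural vertex action for TSP tours; you instead use the larger group $S_n \times S_n$ permuting rows and columns independently. Both suffice for the symmetrization step, since either group's average of $x_M$ is congruent to a constant modulo $\cQ_n$ in degree $|M|$; your version requires additionally verifying the transposition claim for column transpositions, which you correctly note works by the symmetric argument. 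Second, your degree-increase lemma peels off one row and one column at a time (from $\cQ_{n-1}$ to $\cQ_n$ via multiplication by $x_{ab}$), whereas the paper's \prettyref{lem:tour-degree-increase} multiplies by $x_{ab}x_{ba}$ to go from $\cQ_{n-2}$ to $\cQ_n$ in one step; your finer-grained version is arguably cleaner and applies more directly to the term $L'_{bc}\,x_{ab}x_{a'c}$ appearing in the claim, where the two added edges need not be of the paired form $x_{ab}x_{ba}$.
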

Because $\mathcal{Q}_n$ is so similar to $\mathcal{P}_n$, it should come as no surprise that the proof of the above theorem is extremely similar to the proof of \prettyref{thm:main2}. We include the full proof for completeness, but defer it to \prettyref{sec:tour-derivation}. We now have all the tools necessary to prove \prettyref{thm:tsp-main}.
\begin{proof}[Proof of \prettyref{thm:tsp-main}]
First let $\cI$ be an instance of TSP on $K_n$. Use \prettyref{lem:tsp-evenPerm} to construct a TSP instance $\cI'$ on $K_{2n}$ and the corresponding map $\Phi$. Now assume we have an arbitrary $A_{2n}$-symmetric SDP relaxation of size $d < \sqrt{\binom{2n}{k}} - 1$ for TSP on $K_{2n}$. By \prettyref{lem:factorSymSDP} there is a corresponding $A_{2n}$-symmetric family of functions $\mathcal{H'}$ of size $\binom{d+1}{2}$ such that whenever $\min_\tau \val_{\cI'}(\tau) \geq \tilde S(\val_{\cI'})$ we have:
\begin{align*}
\val_{\mathcal{I'}}(\tau) - \tilde C(\val_{\cI'}) \feq \sum_{j} h_j(\tau)^2 + \mu_{\cI'} \qquad \text{where }
                     h_j \in \vspan{\mathcal{H'}} \text{ and } \mu_{\cI'} \geq 0\mper
\end{align*}
Let $h'\in \cH'$. By \prettyref{prop:tsp-junta} $h'(\tau)$ depends only on some subset $W'$ of size at most $k$, and possibly on the sign of $\tau$.

Now we restrict the above relaxation to the image of $\Phi$. By \prettyref{lem:tsp-evenPerm} this does not change the optimum. Using the fact that $\val_{\mathcal{I}}(\sigma) \equiv \val_{\mathcal{I'}}(\Phi(\sigma))$ and setting $\mu_{\cI} = \mu_{\cI'}$ then gives rise to a new relaxation where whenever $\min_\sigma \val_{\cI}(\sigma) \geq \tilde S(\val_{\cI})$ we have:
\[
\val_{\mathcal{I}}(\sigma) - \tilde C(\val_{\cI}) \feq \sum_{j} h_j(\Phi(\sigma))^2 + \mu_{\cI} \qquad \text{where } h_j \in \vspan{\mathcal{H'}}\text{ and }\mu_{\cI} \geq 0
\]
as $\tilde S(\val_{\cI}) = \tilde S(\val_{\cI'})$ and
$\tilde C(\val_{\cI}) = \tilde C(\val_{\cI'})$ by \prettyref{lem:tsp-evenPerm}. Next for each $h'\in\mathcal{H'}$
define $h:S_n \to \R$ by $h(\sigma) = h'(\Phi(\sigma))$. Since
$\Phi(\sigma)$ is even, we then have that each $h$ depends only on the
position of some subset $W \subseteq [n]$ of size at most $k$. Such a
function can be written as a degree-$k$ polynomial $p$ in the
variables $x_{ij}$ so that $p(x^{\sigma}) \feq f(\sigma)$ on the
vertices of $P_{TSP}(n)$. Now by \prettyref{thm:tsp-derivation} we
have that $p \same{(Q_n,2k-1)} h$. Since $\mu_{\cI} \geq 0$ it is clearly the square of a (constant) polynomial, and we conclude that whenever
$\min_\sigma \val_{\cI}(\sigma) \leq \tilde S(\val_{\cI})$ we have:
\begin{align*}
f_{\cI}(x) - \min f_{\cI}/\rho \same{(\mathcal{Q}_n,2k-1)} \sum_{p} p(x)^2
\end{align*}
which is precisely the statement that the $(2k-1)$-level Lasserre relaxation for $P_{TSP}(n)$ is a $\rho$-approximation.
\end{proof}

\ifnum\longform=1
	\ifnum\longform=1
	\subsection{Low-degree certificates for tour ideal membership}
\else
	\section{Low-degree certificates for tour ideal membership}
\fi
\label{sec:tour-derivation}

In this section we prove \prettyref{thm:tsp-derivation}
showing that every degree-$d$ polynomial identically zero
over \emph{TSP tours} is congruent to \(0\) within degree $O(d)$.

Note that any partial tour $\tau$ can be thought of as a partial matching $M$ in $K_{n,n}$, 
namely if $\tau(i) = j$, then $M$ includes the edge $(i,j)$. Because of this, it 
will come as no surprise that the proof proceeds in a very similar
manner to \prettyref{sec:derivation}, and hereafter we shall always refer to partial matchings on $K_{n,n}$
rather than on $K_n$. 

For a partial matching \(M\),
let \(x_{M} \coloneqq \prod_{e \in M} x_{e}\)
denote the product of edge variables for the edges in \(M\).
The first step is to reduce every polynomial to a linear combination
of the \(x_{M}\).
\begin{lemma}
  \label{lem:tour-monomials}
  For every polynomial \(F\) there is a polynomial
  \(F'\) with \(\deg F' \leq \deg F\)
  and \(F \same{(\mathcal{Q}_n,\deg F)} F'\),
  where all monomials of \(F\) have the form \(x_{M}\)
  for some partial matching \(M\).
\end{lemma}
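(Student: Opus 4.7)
The plan is to mirror the structure of the proof of \prettyref{lem:monomials}, exploiting the close parallel between \(\mathcal{P}_n\) and \(\mathcal{Q}_n\). The only new point to watch is that ``edges'' here are edges of \(K_{n,n}\), so that two distinct edges share a vertex precisely when they share a row index \(i\) (handled by the generator \(x_{ij}x_{ik}\)) or share a column index \(j\) (handled by \(x_{ij}x_{kj}\)). Both cases live in \(\mathcal{Q}_n\), so the same reduction applies.

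First I would reduce to the case where \(F\) is a single monomial, since congruence respects sums and the maximum degree does not increase. Write such a monomial as \(F = \prod_{e \in A} x_e^{k_e}\) where \(A\) is a set of edges of \(K_{n,n}\) with multiplicities \(k_e \geq 1\). The generator \(x_e^2 - x_e \in \mathcal{Q}_n\) gives \(x_e^2 \same{2} x_e\), and iterating this yields \(x_e^{k} \same{k} x_e\) for every \(k \geq 1\). Multiplying these congruences together (and taking care that each step multiplies by a polynomial of degree at most \(\deg F\), so the degree bound \(\deg F\) is preserved), we obtain
\[
F \same{(\mathcal{Q}_n, \deg F)} \prod_{e \in A} x_e.
\]

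Next I would split on whether \(A\) is a partial matching in \(K_{n,n}\). If it is, then \(\prod_{e \in A} x_e = x_A\) is already of the desired form and we set \(F' \coloneqq x_A\). Otherwise there exist distinct \(e, f \in A\) sharing a vertex of \(K_{n,n}\); writing \(e = \{i,j\}\) and \(f = \{i,k\}\) (or \(f = \{k,j\}\)) we have \(x_e x_f \in \mathcal{Q}_n\), so \(x_e x_f \same{2} 0\). Multiplying the certifying expression by the remaining \(x_{e'}\) in \(A\) (again at a total degree of at most \(\deg F\)) gives \(\prod_{e \in A} x_e \same{(\mathcal{Q}_n,\deg F)} 0\), and we take \(F' \coloneqq 0\).

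There is no real obstacle beyond bookkeeping: the argument is essentially verbatim the proof of \prettyref{lem:monomials}, with ``two edges of \(K_n\) meeting at a vertex'' replaced by ``two edges of \(K_{n,n}\) sharing a row or column index.'' The one place to be careful is the degree accounting when cascading \(x_e^2 \same{2} x_e\) up to \(x_e^{k_e} \same{k_e} x_e\) and then combining across all \(e \in A\); this is handled as in \prettyref{lem:monomials} by noting that each intermediate polynomial used as a certificate has degree at most \(\deg F\).
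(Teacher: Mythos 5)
Your proof is correct and matches the paper's approach essentially verbatim: reduce to monomials, use \(x_e^2 - x_e\) to reduce multiplicities to one, and use the generators \(x_{ij}x_{ik}\) or \(x_{ij}x_{kj}\) to annihilate any monomial whose edge set is not a partial matching of \(K_{n,n}\). The only difference is the explicit note about rows versus columns, which the paper leaves implicit.
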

\begin{proof}
It is enough to prove the lemma when \(F\) is a monomial:
\(F = \prod_{e \in A} x_{e}^{k_{e}}\)
for a set \(A \subseteq E[K_{n,n}]\) of edges with multiplicities \(k_{e} \geq 1\).
From \(x_{e}^{2} \same{2} x_{e}\) it follows that
\(x_{e}^{k} \same{k} x_{e}\) for all \(k \geq 1\),
hence \(F \same{\deg F} \prod_{e \in A} x_{e}\),
proving the claim if \(A\) is a partial matching.
If \(A\) is not a partial matching,
then there are distinct \(e, f \in A\) with a common vertex,
hence \(x_{e} x_{f} \same{2} 0\) and \(F \same{\deg F}
0\).
\end{proof}

The rest of the proof proceeds identically to \prettyref{thm:derivation}, but 
we let the symmetric group act on polynomials slightly differently. If $K_{n,n} = U_n \cup V_n$
is the bipartite decomposition of $K_{n,n}$, then we only let the permutation group
act on the labels of vertices of $U_n$, i.e. $\sigma x_{(a,b)} = x_{(\sigma(a),b)}$.
We show that under this action, symmetrized polynomials are congruent to a constant, which can again be seen
in the same sequence of lemmas:
\begin{lemma}
  \label{lem:tour+a}
  For any partial matching \(M\) on \(2d\) vertices
  and a vertex \(a \in U_n\) not covered by \(M\),
  we have
  \begin{equation}
    x_{M}
    \same{(\mathcal{Q}_n,d+1)}
    \sum_{\substack{M_{1} = M \cup \{a,u\} \\
        v \in V_{n} \setminus (M \cap V_n)}}
    x_{M_{1}}
    .
  \end{equation}
\end{lemma}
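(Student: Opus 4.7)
The plan is to imitate the proof of \prettyref{lem:matching+a} almost verbatim, replacing the matching generators with their direct analogs in $\mathcal{Q}_n$. The relevant observation is that $\mathcal{Q}_n$ contains the row-sum generator $\sum_{v \in V_n} x_{av} - 1$ for each $a \in U_n$, together with the column-conflict generators $x_{ij} x_{kj}$ for $i \neq k$, which play precisely the roles that $\sum_u x_{au}-1$ and $x_{au}x_{uv}$ played for $\mathcal{P}_n$.

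First, since $a \in U_n$, I would multiply $x_M$ (a polynomial of degree $d$) by the row-sum generator for $a$ to obtain
\[
x_M \same{(\mathcal{Q}_n, d+1)} x_M \sum_{v \in V_n} x_{av} = \sum_{v \in V_n} x_M x_{av},
\]
where the derivation has degree $d+1$ since we use a degree-$1$ generator against a degree-$d$ polynomial.

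Next, I would split the sum according to whether $v$ is already covered by $M$ on the $V_n$ side. If $v \in M \cap V_n$, let $k \in U_n$ be the unique vertex with $(k,v) \in M$; since $a$ is uncovered by $M$ we have $k \neq a$, so $x_{kv} x_{av} \in \mathcal{Q}_n$. Multiplying this degree-$2$ generator by the degree-$(d-1)$ monomial $x_{M \setminus \{(k,v)\}}$ shows $x_M x_{av} \same{(\mathcal{Q}_n, d+1)} 0$ for every such $v$. Only the terms with $v \in V_n \setminus (M \cap V_n)$ survive, and for each such $v$ the edge $(a,v)$ is disjoint from $M$, so $M_1 := M \cup \{(a,v)\}$ is a valid partial matching and $x_M x_{av} = x_{M_1}$, giving the claimed identity.

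I do not anticipate any real obstacle: the bipartite encoding of tours mirrors the matching setting exactly, and the only substantive check is that the required generators ($\sum_v x_{av}-1$ and $x_{kv}x_{av}$ with $k \neq a$) belong to $\mathcal{Q}_n$, which is immediate from the definition. The degree bookkeeping is identical to the matching case, with every step staying at or below $d+1$.
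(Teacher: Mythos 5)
Your proof is correct and matches the paper's argument step for step: the paper also multiplies $x_M$ by the row-sum generator $\sum_{v\in V_n} x_{av}-1$ and then uses the conflict generators $x_{av}x_{bv}$ to kill the terms where $v$ is already matched. You have simply written out the case split that the paper leaves implicit in its single display.
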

\begin{proof}
We use the generators \(\sum_{v} x_{av} - 1\)
to add variables corresponding to edges at \(a\),
and then use \(x_{av} x_{bv}\) to remove monomials
not corresponding to a partial matching:
\begin{equation*}
  x_{M}
  \same{(\mathcal{Q}_n, d+1)}
  x_{M} \sum_{v \in V_n} x_{av}
  \same{(\mathcal{Q}_n,d+1)}
  \sum_{\substack{M_{1} = M \cup \{a,v\} \\
     v \in V_{n} \setminus (M \cap V_n)}}
  x_{M_{1}}
  .
\end{equation*}
\end{proof}

This leads to a similar congruence using all containing
matchings of a larger size:
\begin{lemma}
  \label{lem:partial-tour}
  For any partial matching \(M\) of \(2d\) vertices
  and \(d \leq k \leq n\),
  we have
  \begin{equation}
    \label{eq:partial-tour}
    x_{M} \same{(\mathcal{Q}_n,k)}
    \frac{1}{\binom{n - d}{k - d}}
    \sum_{\substack{M' \supset M \\ \size{M'} = k}} x_{M'}
  \end{equation}
\end{lemma}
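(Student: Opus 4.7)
The plan is to follow the proof of \prettyref{lem:partial-matching} essentially verbatim, adapting only the combinatorial bookkeeping to the bipartite setting of $K_{n,n}$ that governs tour variables. Concretely, I would induct on $k - d$. The base case $k = d$ is immediate: both sides reduce to $x_M$, and the congruence $\same{(\cQ_n, d)}$ holds as an equality.

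For the inductive step with $k > d$, I would fix some vertex $a \in U_n$ not covered by $M$ (which exists since $d < k \le n$) and apply \prettyref{lem:tour+a} to expand
\[
x_M \same{(\cQ_n, d+1)} \sum_{v \in V_n \setminus (M \cap V_n)} x_{M \cup \{(a,v)\}}.
\]
Each summand is a monomial of a partial matching on $2(d+1)$ vertices, so the inductive hypothesis applies, giving
\[
x_M \same{(\cQ_n, k)} \frac{1}{\binom{n - d - 1}{k - d - 1}} \sum_{v} \sum_{\substack{M' \supset M \cup \{(a,v)\} \\ |M'| = k}} x_{M'}.
\]
I would then average the resulting identity over the $n - d$ choices of uncovered $a \in U_n$, exactly as in the non-bipartite proof.

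The one step that requires care is the final counting. For a fixed $M' \supset M$ with $|M'| = k$, I need to count the pairs $(a, v)$ with $a \in U_n \setminus (M \cap U_n)$ such that $(a,v) \in M' \setminus M$. Since $M'$ contains exactly $k - d$ edges beyond $M$, and each such edge has a unique $U_n$-endpoint that is uncovered by $M$, this count is exactly $k - d$. Therefore the coefficient of each $x_{M'}$ in the averaged expression is
\[
\frac{k - d}{(n - d)\binom{n - d - 1}{k - d - 1}} = \frac{1}{\binom{n-d}{k-d}},
\]
as required.

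I do not expect any real obstacle: the bipartite structure simplifies rather than complicates the argument, because \prettyref{lem:tour+a} already fixes a side ($U_n$) on which to expand, and the partner variables range over $V_n$ without the factor-of-$2$ subtleties that appeared in \prettyref{lem:partial-matching} (where both endpoints of a new edge had to be chosen from the uncovered vertex set). Consequently, the denominator $\binom{n/2 - d}{k - d}$ from the non-bipartite case is simply replaced by $\binom{n-d}{k-d}$, with no new ideas needed.
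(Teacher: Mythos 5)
Your proposal is correct and follows the paper's proof essentially verbatim: induction on $k-d$, expansion via \prettyref{lem:tour+a} at an uncovered vertex $a \in U_n$, averaging over the $n-d$ choices of $a$, and the final count of $k-d$ pairs $(a,v)$ per $M'$ to recover the binomial coefficient. The observation that the bipartite setting drops the factor-of-$2$ from the non-bipartite case is also exactly the bookkeeping difference the paper relies on.
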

\begin{proof}
We use induction on \(k - d\).
The start of the induction is when \(k = d\),
when the sides of Equation~\eqref{eq:partial-tour}
are equal.

If \(k > d\), let \(a \in U_n\) be a fixed vertex not covered by \(M\).
Applying \prettyref{lem:tour+a} to \(M\) and \(a\)
followed by the inductive hypothesis gives:
\begin{equation*}
    x_{M}
    \same{(\mathcal{Q}_{n}, d+1)}
    \sum_{\substack{M_{1} = M \cup \{a,u\} \\
        u \in V_{n} \setminus (M \cap V_n)}}
    x_{M_{1}}
    \same{(\mathcal{Q}_n,k)}
    \frac{1}{\binom{n - d - 1}{k - d - 1}}
    \sum_{\substack{
        M' \supset M_{1} \\ \size{M'} = k \\
        M_{1} = M \cup \{a,u\} \\
        u \in V_{n} \setminus (M \cap V_n)}}
    x_{M'}
    .
\end{equation*}
Averaging over all vertices \(a \in U_n\) not covered by \(M\),
we obtain
\begin{equation*}
  x_{M}
  \same{(\mathcal{Q}_n, k)}
  \frac{1}{n - d}
  \frac{1}{\binom{n - d - 1}{k - d - 1}}
  \sum_{\substack{
      M' \supset M_{1} \\ \size{M'} = k \\
      M_{1} = M \cup \{a,u\} \\
      a \in U_n \setminus (M \cap U_n) \\
      u \in V_{n} \setminus (M \cap V_n)}}
  x_{M'}
  =
  \frac{1}{n - d}
  \frac{1}{\binom{n - d - 1}{k - d - 1}}
  (k - d)
  \sum_{\substack{
      M' \supset M \\ \size{M'} = k}}
  x_{M'}
  =
  \frac{1}{\binom{n - d}{k - d}}
  \sum_{\substack{M' \supset M \\ \size{M'} = k}}
  x_{M'}
  .
\end{equation*}
\end{proof}

\begin{corollary}
  \label{cor:tour-constant}
  For any polynomial \(F\),
  there is a constant \(c_{F}\) with
  \(\sum_{\sigma \in S_{n}} \sigma F \same{(\mathcal{Q}_n,\deg F)} c_{F}\).
\end{corollary}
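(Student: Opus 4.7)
The plan is to mirror \prettyref{lem:constant}, the analog for the perfect matching problem, now adapted to the $S_n$-action that permutes only the $U_n$-coordinates. By \prettyref{lem:tour-monomials} I would first reduce to the case $F = x_M$ for a partial matching $M$ in $K_{n,n}$ with $k = \size{M}$ edges. Under the restricted action the stabilizer of $M$ consists exactly of those permutations fixing each vertex of $U_n \cap V(M)$, so it has order $(n-k)!$, while the orbit of $M$ is exactly the set of partial matchings $M'$ with $V_n(M') = V_n(M)$. This gives
\[
\sum_{\sigma \in S_n} \sigma x_M \;=\; (n-k)!\, \sum_{M' \colon V_n(M') = V_n(M)} x_{M'},
\]
and reduces the task to showing $T_B \coloneqq \sum_{M' \colon V_n(M') = B} x_{M'} \same{k} 1$ for every $B \subseteq V_n$ with $\size{B} = k$.

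The crux will be deriving $T_B \same{k} 1$. I would compare $T_B$ with the fully expanded product
\[
\prod_{b \in B} \sum_{a \in U_n} x_{ab} \;=\; \sum_{g \colon B \to U_n} \prod_{b \in B} x_{g(b),b}.
\]
The injective $g$'s recover precisely the monomials of $T_B$, whereas every non-injective $g$ produces a monomial containing a factor $x_{ab} x_{ab'}$ for some $a \in U_n$ and distinct $b, b' \in B$; since $x_{ab} x_{ab'}$ is a degree-$2$ generator of $\cQ_n$, each such contribution lies in $\ispan{\cQ_n}$ via a derivation of degree $k$. Hence $T_B \same{k} \prod_{b \in B} \sum_a x_{ab}$. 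A standard telescoping identity, applied to the generators $\sum_a x_{ab} - 1 \in \cQ_n$ one factor at a time, then collapses this product to $1$ in degree at most $k$. Combining gives $\sum_\sigma \sigma x_M \same{k} (n-k)!$, a constant.

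The main obstacle compared with \prettyref{lem:constant} is that the $U_n$-only action produces strictly smaller orbits: the orbit of $M$ fixes its $V_n$-side rather than sweeping over all $k$-matchings in $K_{n,n}$, so \prettyref{lem:partial-tour} with $d=0$ does not plug in directly to equate the orbit sum with a sum over all $k$-matchings. The product-expansion followed by telescoping outlined above substitutes for that single appeal to \prettyref{lem:partial-tour} in the matching proof, and crucially keeps the full derivation within degree $k = \deg F$, as the statement demands.
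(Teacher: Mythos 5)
Your proof is correct, and in fact it fixes a small error in the paper's own proof of this corollary. The paper asserts
\[
\sum_{\sigma\in S_n}\sigma x_M=(n-k)!\sum_{M'\colon\size{M'}=k}x_{M'}
\]
and then applies \prettyref{lem:partial-tour} with $d=0$. But, as you observe, the action is by permutation of $U_n$ only, so $\sigma M$ always has the same $V_n$-side as $M$; by orbit--stabilizer the orbit sum is $(n-k)!\sum_{M'\colon V_n(M')=V_n(M)}x_{M'}$, which is strictly smaller than the sum over all size-$k$ partial matchings (already visible for $n=2$, $M=\{(1,1')\}$). Thus \prettyref{lem:partial-tour}, which sums over all extensions and hence sweeps all $V_n$-sides, does not plug in as written. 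Your replacement argument --- expanding $\prod_{b\in B}\sum_a x_{ab}$, killing the non-injective terms with the generators $x_{ab}x_{ab'}$, and telescoping against the row-sum generators $\sum_a x_{ab}-1$ to collapse the product to $1$ in degree $k$ --- correctly shows $T_B\same{(\cQ_n,k)}1$ and hence $\sum_\sigma\sigma x_M\same{(\cQ_n,k)}(n-k)!$. (The two arguments produce different explicit constants, $(n-k)!$ versus the paper's $(n-k)!\binom{n}{k}$, which confirms the discrepancy; only the existence of some constant matters downstream, since $c_F$ is forced to be $0$ once $F\in\ispan{\cQ_n}$.) Your route is also a genuine simplification in that it bypasses \prettyref{lem:partial-tour} and \prettyref{lem:tour+a} entirely for this corollary, using only the linear and quadratic generators directly.
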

\begin{proof}
In view of \prettyref{lem:tour-monomials},
it is enough to prove the claim for
\(F = x_{M}\) for some partial matching \(M\)
on \(2 k\) vertices,
which is an easy application of \prettyref{lem:partial-tour}
with \(d = 0\):
\begin{equation*}
  \sum_{\sigma \in S_{n}} \sigma x_{M}
  = (n-k)! \sum_{M' \colon \size{M'} = k} x_{M'}
  \same{k}
  (n-k)! \binom{n}{k}
  .
\end{equation*}
\end{proof}

The next lemma will allow us to apply induction:
\begin{lemma}
  \label{lem:tour-degree-increase}
  If \(L\) is a polynomial with
  \(L \same{(\mathcal{Q}_{n-2}, d)} 0\)
  and \(a, b\) are the additional vertices in \(\cQ_n\)
  then \(L x_{ab}x_{ba} \same{(\mathcal{Q}_{n}, d + 2)} 0\).
\end{lemma}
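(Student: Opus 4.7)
The plan is to mimic the proof of \prettyref{lem:degree-increase} (the matching analog) with the extra factor $x_{ab}x_{ba}$ in place of $x_{ab}$, since in the bipartite encoding of permutations the two "new" indices $a, b$ each contribute a left- and a right-vertex, and the partial matching $\{(a,b),(b,a)\}$ on $K_{n,n}$ plays the role that the single edge $\{a,b\}$ did for matchings.

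First I would reduce to the case where $L$ is a single generator of $\mathcal{Q}_{n-2}$. Indeed, given a degree-$d$ derivation $L + \sum_{p \in \mathcal{Q}_{n-2}} q_p \cdot p = 0$, multiplying through by $x_{ab}x_{ba}$ yields $L x_{ab}x_{ba} + \sum_p (q_p x_{ab}x_{ba}) \cdot p = 0$ of degree $\le d+2$, so it suffices to express each $p \cdot x_{ab}x_{ba}$ (for $p \in \mathcal{Q}_{n-2}$) as a combination of elements of $\mathcal{Q}_n$ whose coefficients can absorb the $q_p$ without increasing the total degree beyond $d+2$.

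Next I would split into cases on the form of $p$. For $p$ of the form $x_{ij}^2 - x_{ij}$, $x_{ij}x_{ik}$, or $x_{ij}x_{kj}$ with all indices in $[n-2]$, we have $p \in \mathcal{Q}_n$ and the claim is immediate. The only non-trivial cases are the row/column sums $p = \sum_{i \in [n-2]} x_{ij} - 1$ for $j \in [n-2]$ and the symmetric $p = \sum_{j \in [n-2]} x_{ij} - 1$. For the former, letting $p' \coloneqq \sum_{i \in [n]} x_{ij} - 1 \in \mathcal{Q}_n$, I would write
\begin{equation*}
p \cdot x_{ab} x_{ba}
= p' \cdot x_{ab} x_{ba} \;-\; (x_{aj} x_{ab}) \cdot x_{ba} \;-\; (x_{bj} x_{ba}) \cdot x_{ab}.
\end{equation*}
Since $j \in [n-2]$ while $a,b \notin [n-2]$, we have $j \neq b$ and $j \neq a$, so $x_{aj}x_{ab}$ and $x_{bj}x_{ba}$ are both elements of $\mathcal{Q}_n$ (of the form $x_{ij}x_{ik}$ with $j \neq k$). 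Each of the three summands is thus in the degree-3 part of the ideal generated by $\mathcal{Q}_n$, which after multiplication by $q_p$ (of degree $\le d-1$) gives a contribution of degree $\le d+2$. The column-sum case is handled identically by exchanging the roles of rows and columns.

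There is no substantive obstacle beyond the bookkeeping above; the only thing to verify is that all the "extra" terms introduced when replacing a $\mathcal{Q}_{n-2}$-generator by a $\mathcal{Q}_n$-generator land inside $\mathcal{Q}_n$, which is exactly what the bipartite structure guarantees as long as the new index does not collide with an index in $[n-2]$.
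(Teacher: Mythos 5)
Your proposal is correct and takes essentially the same route as the paper: reduce to $L$ being a single generator, observe the quadratic generators already lie in $\mathcal{Q}_n$, and for the row/column sums replace the $\mathcal{Q}_{n-2}$-generator by the corresponding $\mathcal{Q}_n$-generator plus correction terms $x_{aj}x_{ab}\cdot x_{ba}$ and $x_{bj}x_{ba}\cdot x_{ab}$, which land in $\mathcal{Q}_n$ because the new left-vertices $a,b$ cannot collide with the old right-vertex $j$. Your degree bookkeeping ($\le d+2$ after absorbing the $q_p$ of degree $\le d-1$) is also the correct count and, if anything, more carefully stated than the paper's.
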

\begin{proof}
It is enough to prove the claim when \(L\)
is from \(\mathcal{Q}_{n-2}\).
For \(L = x_{e}^{2} - x_{e}\), \(L = x_{uv} x_{uw}\), and \(L = x_{uv}x_{wv}\)
the claim is trivial, as then \(L \in \mathcal{Q}_{n}\).
The remaining cases are
\begin{enumerate*}
\item
  \(L = \sum_{u \in U_{n-2}} x_{uv} - 1\) for some \(v \in V_{n-2}\)
\item \(L = \sum_{v \in V_{n-2}} x_{uv} - 1\) for some $u \in U_{n-2}$
\end{enumerate*}.
We only deal with the first case, as the second one is analogous.
Then
\[L x_{ab}x_{ba} = \left( \sum_{u \in U_{n}} x_{uv} - 1 \right) x_{ab}x_{ba}
- x_{a v} x_{ab}x_{ba} - x_{b v} x_{ab}x_{ba} \same{(\cQ_n,d+1)} 0.
\]
\end{proof}

We are now ready to prove \prettyref{thm:tsp-derivation}. 
\begin{proof}[Proof of \prettyref{thm:tsp-derivation}]
We use induction on the degree \(d\) of \(F\).
The case \(d=0\) is obvious, as then clearly \(F = 0\).
(Note that \(\same{-1}\) is just equality.)
The case \(d=1\) rephrased means that
the affine space spanned by the characteristic vectors of
all perfect matchings
is defined by the \(\sum_{v} x_{uv} - 1\) for all vertices \(u\).
This follows again from Edmonds's description
of the perfect matching polytope by linear inequalities
in \cite{Edmonds65} (valid for any graph
in addition to \(K_{2n}\) and \(K_{n,n}\)).

For the case \(d \geq 2\) we first prove the following claim:
\begin{claim*}
If \(F \in \ispan{\cQ_n}\) is a degree-\(d\) polynomial and \(\sigma \in S_n\) is a permutation of vertices, then
\[F \same{(\mathcal{Q}_n,2d-1)} \sigma F.\]
\end{claim*}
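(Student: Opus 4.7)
The plan is to mirror the inner claim of \prettyref{thm:main2}, adapting each step to the bipartite structure of \(\cQ_n\). I would proceed by induction on \(d\), with the base cases \(d = 0\) and \(d = 1\) inheriting directly from the corresponding base cases of \prettyref{thm:tsp-derivation} (via Edmonds's description of the bipartite perfect matching polytope for \(d=1\)). For \(d \geq 2\), every permutation factors as a product of transpositions at no cost to the degree bound, so it suffices to handle a single transposition \(\sigma = (a\ u)\) of two row vertices \(a, u \in U_n\).

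For such a \(\sigma\), all monomials of \(F\) that are not incident to row \(a\) or row \(u\) cancel in \(F - \sigma F\), so I can write the difference as a sum of terms of the form \(K \cdot x_{av}\) or \(K \cdot x_{uw}\) with \(\deg K \leq d - 1\). Multiplying each such term by \(\sum_{w} x_{uw} \feq 1\) or \(\sum_v x_{av} \feq 1\) and using \(x_{av} x_{uv} \in \cQ_n\) to kill column-collisions gives, in degree \(d+1\),
\[
F - \sigma F \same{(\cQ_n,\, d+1)} \sum_{v \neq w} L'_{v,w}\, x_{av}\, x_{uw},
\]
where each \(L'_{v,w}\) has degree at most \(d-1\) and, after standard clean-up using the remaining \(\cQ_n\)-generators, depends only on variables incident neither to rows \(\{a,u\}\) nor to columns \(\{v,w\}\).

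The final step is to identify each \(L'_{v,w}\) as living in a smaller tour ideal: evaluating the displayed congruence on any tour \(\tau\) with \(\tau(a) = v\) and \(\tau(u) = w\) and using \(F, \sigma F \feq 0\) forces \(L'_{v,w}(\tau) = 0\), and since the remaining part of \(\tau\) ranges over all bijections from \([n]\setminus\{a,u\}\) to \([n]\setminus\{v,w\}\), the polynomial \(L'_{v,w}\) lies in \(\ispan{\cQ_{n-2}}\) after the obvious relabeling. The inductive hypothesis of \prettyref{thm:tsp-derivation} then delivers \(L'_{v,w} \same{(\cQ_{n-2},\, 2d-3)} 0\), and a mild generalization of \prettyref{lem:tour-degree-increase}---proved by the same generator-by-generator case analysis, but allowing the two added rows \(\{a,u\}\) and the (a priori unrelated) two added columns \(\{v,w\}\), and multiplying by \(x_{av} x_{uw}\)---lifts this to \(L'_{v,w}\, x_{av} x_{uw} \same{(\cQ_n,\, 2d-1)} 0\). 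Summing over pairs \((v,w)\) yields \(F \same{(\cQ_n,\, 2d-1)} \sigma F\). The main point of care, and the only real departure from the matching proof, is the bipartite bookkeeping: because the symmetric group acts only on row labels, the forced edge pair \((x_{av}, x_{uw})\) involves rows in \(\{a,u\}\) and columns in a disjoint pair \(\{v,w\}\), so the lifting step must reintroduce two rows and two columns in a single shot rather than two separate applications of the degree-increase lemma as in the matching case.
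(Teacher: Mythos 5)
Your proposal follows the same route as the paper's own proof of this claim: induct on degree (with Edmonds's bipartite perfect matching polytope handling the base case $d=1$), reduce to a row-label transposition $(a\ u)$, rewrite $F - \sigma F$ as a sum of terms $L'_{v,w}\,x_{av}x_{uw}$ with each $L'_{v,w}$ of degree at most $d-1$ and cleaned so as to avoid variables incident to rows $\{a,u\}$ or columns $\{v,w\}$, apply the inductive hypothesis to the $L'_{v,w}$ in a smaller tour ideal, and lift. You are, however, more careful than the paper on the lifting step, and this is worth noting. The paper's write-up places $L'_{bv}$ in $\ispan{\cQ_{n-4}}$, ``identifying $K_{n-4}$ with $K_n\setminus\{a,b,u,v\}$,'' and invokes ``two applications of \prettyref{lem:tour-degree-increase},'' apparently carrying over the matching argument of \prettyref{thm:derivation} too literally. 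But this does not type-check here: $L'_{v,w}$ may still mention variables such as $x_{(v,a)}$, so it lives on the bipartite graph with rows $[n]\setminus\{a,u\}$ and columns $[n]\setminus\{v,w\}$ --- that is, $K_{n-2,n-2}$ after relabeling, not $K_{n-4,n-4}$ --- and the required multiplier $x_{av}x_{uw}$ is not of the transposition form $x_{ab}x_{ba}$ to which \prettyref{lem:tour-degree-increase} applies. Your fix --- placing $L'_{v,w}$ in $\ispan{\cQ_{n-2}}$ and using a one-shot generalization of \prettyref{lem:tour-degree-increase} that adjoins two rows and two (a priori unrelated) columns while multiplying by $x_{av}x_{uw}$, established by the same generator-by-generator case analysis --- is the correct bipartite bookkeeping, and you correctly identify it as the one genuine departure from the matching case.
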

We use induction on the degree. If \(d = 0\) or \(d = 1\) the claim follows from the corresponding cases \(d = 0\) and \(d = 1\) of the theorem.
For \(d \ge 2\) it is enough to prove the claim
when \(\sigma\) is a transposition of two vertices \(a\) and \(u\).
Note that in \(F - \sigma  F\)
all monomials which do not contain an $x_e$ with $e$ incident to $a$ or $u$ on the left cancel:
\begin{equation}
\label{eq:transposition-diff-sum-tsp}
F - \sigma F = \sum_{e \colon e = (a,r) \textbf{ or } e = (u,r)} L_{e} x_{e}
\end{equation}
where each \(L_{e}\) has degree at most \(d - 1\).
We now show that every summand is congruent to a sum of monomials
containing edges incident to both \(a\) and \(u\) on the left. For example, for \(e=\{a,b\}\) in \prettyref{eq:transposition-diff-sum-tsp}, we apply the generator \(\sum_v x_{uv} - 1\) to find:
\begin{equation*}
 L_{ab} x_{ab}
 \same{d+1} L_{ab} x_{ab} \sum_{v} x_{uv}
 \same{d+1} \sum_v L_{ab}x_{ab}x_{uv}.
\end{equation*}
Therefore
\begin{equation*}
  F - \sigma F \same{d+1} \sum_{bv} L'_{bv} x_{ab} x_{uv}
\end{equation*}
for some polynomials \(L'_{bv}\) of degree at most \(d-1\).
We may assume that \(L'_{bv}\) does not contain variables \(x_{e}\)
with \(e\) incident to \(a, u\) on the left or \(b, v\) on the right,
as these can be removed
using generators like \(x_{ab} x_{ac}\) or \(x_{ab}^{2} - x_{ab}\).
Moreover, since $F$ is zero on all perfect matchings, it can be checked that \(L'_{bv}\) is zero on all perfect matchings containing
\(\{a, b\}\) and \(\{u,v\}\).
By induction, \(L'_{bv} \same{(\cQ_{n-4}, 2d-3)} 0\)
(identifying \(K_{n-4}\)
with the graph \(K_{n} \setminus \{a, b, u, v\}\)),
from which \(L'_{bv} \same{(\cQ_{n}, 2d - 1)} 0\) follows
by two applications of \prettyref{lem:tour-degree-increase}. (The special case
\(a = v, b = u\) is also handled by induction and one application of \prettyref{lem:tour-degree-increase}.) This concludes the proof of the claim.

We now apply the claim followed by \prettyref{cor:tour-constant}:
\begin{equation*}
  F \same{2d-1} \frac{1}{n!} \sum_{\sigma \in S_{n}} \sigma F
  \same{d} \frac{c_{F}}{n!}
\end{equation*}
for a constant \(c_{F}\).
As $F \in \ispan{\cQ_n}$, it must be that \(c_{F} = 0\), and therefore
\(F \same{2d - 1} 0\).
\end{proof}

\fi

\clearpage

\addreferencesection
\bibliographystyle{plainnat}
\bibliography{bib/mr,bib/dblp,bib/scholar,bib/bibliography,bib/lpsize,bib/journal}

\clearpage

\ifnum\longform=0
{
	\appendix

\section{Highly symmetric functions are juntas}

\subsection{Proof of \prettyref{prop:tsp-junta}}
\label{sec:tsp-junta}
In this section we argue briefly how \prettyref{lem:DixonMortimer} can be applied in an identical manner to get \prettyref{prop:tsp-junta}:

	\clearpage
}
\fi

\end{document}